\documentclass[12pt, final]{article}

\usepackage{amsmath, amsthm, amssymb, amsfonts}
\DeclareMathOperator{\topop}{top}
\DeclareMathOperator{\TR}{TR}
\DeclareMathOperator{\IR}{IR}
\DeclareMathOperator{\TTR}{TTR}
\DeclareMathOperator{\T2TR}{T2TR}
\DeclareMathOperator{\PR}{PR}

\usepackage[letterpaper,margin=1.25in]{geometry} % matches econsocart draft text block (6in x 8.5in)
\usepackage{setspace}
\usepackage[dvipsnames]{xcolor}
\usepackage{tikz}
\usepackage{subfiles}
\usepackage{ifthen}

\theoremstyle{plain}
\newtheorem{theorem}{Theorem}
\newtheorem{lemma}{Lemma}
\newtheorem{proposition}{Proposition}
\newtheorem{corollary}{Corollary}
\theoremstyle{definition}
\newtheorem{example}{Example}
\newtheorem{definition}{Definition}

\usepackage{natbib}
\usepackage[
    colorlinks=true,
    linkcolor=blue,
    citecolor=blue,
    urlcolor=blue,
    filecolor=blue,
    hyperfootnotes=false,
    hyperindex,
    breaklinks
]{hyperref}

\usepackage{newtxtext} %\usepackage{times}
\usepackage{newtxmath}
\usepackage{microtype}
\microtypesetup{expansion=false} % avoid squeezing extra characters per line vs the class file

\usepackage[normalem]{ulem}

\title{Minimally rational reallocation of objects%
  \thanks{We used ChatGPT to assist with editing, to check proofs, and to construct the rule in \autoref{ex:fails-tr-pr}.}}

\author{
\"Ozg\"un Ekici\thanks{Department of Economics, \"Ozye\u{g}in University, Istanbul, T\"urkiye. Email: \texttt{ozgun.ekici@ozyegin.edu.tr}}
\and
M. Bumin Yenmez\thanks{Department of Economics, Washington University in St. Louis, USA; Durham University, UK; and \"Ozye\u{g}in University, T\"urkiye. Email: \texttt{bumin@wustl.edu}}
}
\date{September 23, 2026}

\usepackage{lineno}

\begin{document}
%\linenumbers
\maketitle

\begin{abstract}
Matching theory has largely evolved around two canonical rules: deferred acceptance (DA) in the marriage problem and top trading cycles (TTC) in the object reallocation problem. Although the two rules operate through different procedures, we show that they rest on a common axiomatic foundation. In the marriage problem, stability decomposes into individual rationality and pair rationality, and a classic result characterizes DA by these two axioms together with strategy-proofness for the proposing side. We show that individual rationality, pair rationality, and strategy-proofness likewise characterize TTC. More strongly, three weak rationality axioms concerning individuals and pairs, together with strategy-proofness, uniquely determine TTC. Even weak rationality requirements therefore rule out strategy-proof implementation when there is a cap on the size of exchange cycles.
\end{abstract}

\begingroup
\small
\noindent\textbf{Keywords:} top trading cycles; individual rationality; pair rationality; top rationality; top-top rationality; top2-top rationality.\\
\noindent\textbf{JEL Classification:} C78, D47, D78.
\endgroup

%%%%%%%%%%%%%%%%%%%%%%%%%%%%%%%%%%%%%%%%%%%%%%%%%%%%%%%%%%%%%%%%%%%%%%%%%
%%%% Main text entry area:
%%%%%%%%%%%%%%%%%%%%%%%%%%%%%%%%%%%%%%%%%%%%%%%%%%%%%%%%%%%%%%%%%%%%%%%%%

\section{Introduction}\label{sec:intro}

Deferred acceptance (DA) and top trading cycles (TTC) are two fundamental mechanisms in the theory of matching markets. Their origins lie in the marriage problem introduced by \citet{Gale1962AMM} and the object reallocation problem introduced by \citet{Shapley1974JME}. In the marriage problem, agents on both sides of the market have preferences over potential partners. In the object reallocation problem, each agent initially owns one object and has preferences over objects. Although DA and TTC operate through different procedures and are usually motivated by different solution concepts, we show that they share a common axiomatic foundation: each is the unique rule in its model satisfying individual rationality, pair rationality, and strategy-proofness, which in the marriage problem is required only of the proposing side. In the object reallocation problem, we obtain this characterization from substantially weaker rationality requirements concerning the first and second choices of individuals and pairs of agents.

In the marriage problem, \citet{Gale1962AMM} proposed DA, an iterative procedure in which agents on one side make proposals and agents on the other side tentatively retain their most preferred proposals. They showed that DA selects the man-optimal stable matching when men propose. Their original model ruled out remaining single, but the result extends to a setting with this option. Stability then combines individual rationality (IR), which requires that no agent be assigned a partner ranked below remaining single, and pair rationality (PR), which requires that no man and woman both prefer one another to their assigned partners. \citet{Roth1982MOR} showed that DA is strategy-proof (SP) for the proposing agents. \citet{Alcalde1994ET} established that SP for the proposing side, IR, and PR uniquely characterize DA.

We establish the corresponding characterization for TTC in the object reallocation problem. \citet{Shapley1974JME} proposed TTC, which they attribute to David Gale. The rule reallocates objects through trading cycles: at each stage, every remaining agent points to her favorite remaining object, every remaining object points to its owner, and all cycles are executed. Here, IR requires that each agent receive an object she finds at least as desirable as her endowment, while PR requires that no two agents can exchange their endowments so that one is strictly better off and the other no worse off. We show that TTC is the unique rule satisfying SP, IR, and PR (\autoref{cor:irpr}). Thus, the same combination of incentive and rationality requirements that determines DA in the two-sided model determines TTC in the one-sided model.

The force of this result comes from the distinction between bilateral rationality and core membership. \citet{Shapley1974JME} showed that the TTC allocation belongs to the core: no coalition can reallocate its endowments to make all its members at least as well off and at least one strictly better off. This core allocation is unique \citep{Roth1977JME}. Core membership therefore imposes restrictions on coalitions of every size, whereas IR and PR concern only individuals and pairs. Our result shows that, together with SP, these individual and bilateral requirements already determine the entire core allocation.

Previous characterizations of TTC combine SP and IR with various third axioms: Pareto efficiency in \citet{Ma1994IJGT}, endowments-swapping-proofness in \citet{Fujinaka2018GEB}, and pair efficiency in \citet{Ekici2024TE}.\footnote{Pair efficiency concerns exchanges of assignments, whereas PR concerns exchanges of endowments. The two are logically incomparable, even among individually rational allocations (\autoref{ex:fails-t2tr}).} A characterization based on pair rationality, however, had not been obtained. Our result fills this gap and establishes the parallel with DA. In fact, we prove a stronger result: TTC is the unique rule satisfying SP and three weak rationality axioms—top rationality (TR), top-top rationality (TTR), and top2-top rationality (T2TR) (\autoref{thm:characterization}). The characterization by SP, IR, and PR follows as a corollary (\autoref{cor:irpr}).

These weak rationality axioms concern individual assignments and bilateral exchanges involving agents' first and second choices. TR requires that an agent receive her endowment whenever it is her top choice. TTR requires two agents to receive each other's endowments whenever each ranks the other's endowment first. T2TR applies when one agent ranks the other's endowment first and the other ranks the first agent's endowment second: if the latter does not receive her first choice, the two agents must receive each other's endowments. These conditions protect an agent's endowment when it is her top choice and require particular bilateral trades when the agents rank one another's endowments sufficiently highly.

Comparing these three rationality axioms with IR and PR clarifies the role of strategy-proofness. TR is implied by IR and, at a fixed profile, is strictly weaker (\autoref{prop:tr-weakerthan-ir}). Among SP rules, however, the two are equivalent: SP and TR together imply IR (\autoref{prop:sp-tr-implies-ir}). TTR and T2TR are implied by PR and, even in conjunction, are strictly weaker at a fixed profile (\autoref{prop:ttr-t2tr-weakerthan-pr}). The difference survives among SP rules: there are SP rules satisfying TTR and T2TR that violate PR (\autoref{prop:sp-ttr-t2tr-notimplies-pr}). Nevertheless, when all three rationality axioms are combined with SP, they determine TTC and hence recover full core membership.

DA and TTC differ in their procedures and in the strategic structure of their markets, yet each is uniquely determined by IR, PR, and strategy-proofness for the relevant side: the proposing agents under DA and all agents under TTC.\footnote{In the marriage problem,  no rule satisfies IR and PR while being strategy-proof for both sides \citep{Roth1982MOR}. However, SP for one side amounts to full strategy-proofness if the other side consists of resources with exogenous priority rankings instead of agents with preferences, as in the school-choice setting.} In the object reallocation model, our stronger characterization shows that even weaker rationality conditions suffice: protecting an agent's endowment when it is her top choice and requiring two specific forms of bilateral exchange.

The characterization also has implications for restrictions on trading cycles. In applications such as kidney exchange, logistical constraints make large exchanges undesirable. In our object reallocation model, TTC may require arbitrarily large cycles as the market grows. Our main theorem therefore implies that no rule satisfies SP together with TR, TTR, and T2TR once exchange cycles are capped below the size of the market (\autoref{cor:restricted-cycles}). For two-way exchange the obstruction is sharper still: T2TR alone cannot be satisfied at a particular three-agent profile, even without imposing TR and SP (see the discussion following \autoref{cor:restricted-cycles}).

The rest of the paper is organized as follows. \autoref{sec:model} presents the model, and \autoref{sec:ttc} presents TTC. \autoref{sec:rationality-axioms} introduces and compares the rationality axioms. \autoref{sec:main-result} states the main result, and \autoref{sec:proof} provides its proof. \autoref{sec:sum-rellit} summarizes the paper and discusses the related literature. The independence of the axioms in our main result is established in the \hyperref[sec:indep-of-axioms]{Appendix}.

\section{Model}\label{sec:model}

Let $N=\{1,2,\ldots,n\}$ be a finite set of agents. Let $O=\{o_1,o_2,\ldots,o_n\}$ be a set of indivisible objects. Each agent $i$ initially owns object $o_i$, called her \textbf{endowment}.

For any $X\subseteq N$, let $\omega(X)=\{o_i:i\in X\}$ denote the set of objects initially owned by members of $X$. A nonempty subset of agents is called a \textbf{coalition}.

Each agent cares only about the object she receives. Let
$\mathcal{P}$ be the set of all strict linear orders on $O$.
Agent $i$'s preference relation is denoted by $P_i\in\mathcal{P}$,
where $a\,P_i\,b$ means that she prefers object $a$ to object $b$.
Let $R_i$ denote the associated weak preference relation:
$a\,R_i\,b$ if either $a\,P_i\,b$ or $a=b$.
When convenient, we present preferences by listing objects
from most preferred to least preferred. For example,
$P_i:o_2\succ o_1\succ o_3$ means
$o_2\,P_i\,o_1\,P_i\,o_3$.

For $q\in\{1,\ldots,n\}$, let $\topop_q(P_i)$ denote
the object ranked $q^{\text{th}}$ under $P_i$.

A \textbf{preference profile} is a list
$P=(P_i)_{i\in N}\in\mathcal{P}^n$.
For $X\subseteq N$, the mixed profile
$(P_X,\tilde{P}_{N\setminus X})$ assigns preference relation
$P_i$ to each $i\in X$ and $\tilde{P}_i$ to each
$i\in N\setminus X$. When $X=\{i\}$, we write
$(P_i,\tilde{P}_{-i})$. Other mixed-profile notation
is understood accordingly.

An \textbf{allocation} is a bijection $\mu:N\to O$.
Let $\mathcal{M}$ denote the set of allocations, and let
$\mu(i)$ denote agent $i$'s assignment under $\mu$.
When convenient, we represent an allocation as an ordered
list in square brackets. For example, with three agents,
$\mu=[o_2,o_1,o_3]$ means that $\mu(1)=o_2$,
$\mu(2)=o_1$, and $\mu(3)=o_3$. A \textbf{rule} is a mapping
$\phi:\mathcal{P}^n\to\mathcal{M}$.
Let $\phi_i(P)$ denote agent $i$'s assignment under
the allocation $\phi(P)$.

An allocation $\nu$ \textbf{Pareto dominates} an allocation
$\mu$ at profile $P$ if $\nu(i)\,R_i\,\mu(i)$ for every
$i\in N$, with strict preference for at least one agent.
An allocation is \textbf{Pareto-efficient} at $P$ if no
allocation Pareto dominates it at $P$. A rule is
Pareto-efficient if it selects a Pareto-efficient allocation
at every profile.

An allocation $\mu$ is \textbf{pair efficient} at $P$ if
there are no two agents $i,j\in N$ such that
$\mu(j)\,P_i\,\mu(i)$ and $\mu(i)\,P_j\,\mu(j)$. Thus, under
a pair efficient allocation, no two agents can exchange
their assignments to the benefit of both.

An allocation $\mu$ is a \textbf{core allocation} at profile $P$ if there is no coalition $X$ and bijection $\nu:X\to\omega(X)$ such that $\nu(i)\,R_i\,\mu(i)$ for every $i\in X$, with strict preference for at least one member of $X$.

Our incentive compatibility notion is strategy-proofness, which rules out an agent benefiting from misreporting her preferences. A rule $\phi$ is \textbf{strategy-proof (SP)} if, for every profile $P\in\mathcal{P}^n$, agent $i\in N$, and report $\tilde{P}_i\in\mathcal{P}$, $\phi_i(P)\,R_i\,\phi_i(\tilde{P}_i,P_{-i})$. Thus, truthful reporting is a weakly dominant strategy.

In the remainder of the paper, we often abbreviate the names of axioms to simplify the exposition. For example, the reader should read ``$\phi$ satisfies SP'' as ``$\phi$ satisfies strategy-proofness.'' The same convention applies to the rationality axioms introduced in \autoref{sec:rationality-axioms}.

\section{Top trading cycles}\label{sec:ttc}

A \textbf{cycle} consists of distinct agents $i_1,\ldots,i_s$ and their endowments, with each agent pointing to the next agent's endowment, the last agent pointing to the first agent's endowment, and each object pointing to its owner. We represent this cycle by $(i_1,\ldots,i_s)$; the representation is unique up to rotation, so we may start it at any of its agents. Graphically, it looks as follows:

\begin{center}
\begin{tikzpicture}[scale=1]
\node at (-1,0) {a cycle:};
\node (o1) at (0,0) {$o_{i_1}$};
\node (i1) at (1,0) {$i_1$};
\node (o2) at (2,0) {$o_{i_2}$};
\node (i2) at (3,0) {$i_2$};
\node (o3) at (4,0) {$o_{i_3}$};
\node (dots) at (5,0) {$\dots$};
\node (os) at (6,0) {$o_{i_s}$};
\node (is) at (7,0) {$i_s$};

\draw (o1) edge [->] (i1);
\draw (i1) edge [->] (o2);
\draw (o2) edge [->] (i2);
\draw (i2) edge [->] (o3);
\draw (o3) edge [->] (dots);
\draw (dots) edge [->] (os);
\draw (os) edge [->] (is);
\draw (is) edge [->, bend right=12] (o1);
\end{tikzpicture}
\end{center}

The \textbf{size of a cycle} is the number of agents it contains. A cycle may have size one: in the cycle $(i)$, agent $i$ points to her own endowment.

A cycle $(i_1,\ldots,i_s)$ is \textbf{executed} when each agent receives the object to which she points. Thus, an allocation $\mu$ executes this cycle if
\[
\mu(i_r)=o_{i_{r+1}}
\quad\text{for }r=1,\ldots,s-1,
\qquad
\mu(i_s)=o_{i_1}.
\]

A cycle $(i_1,\ldots,i_s)$ is an \textbf{all-tops cycle} at profile $P$ if each agent in the cycle ranks the next agent's endowment first, where the agent following $i_s$ is $i_1$.

\begin{description}

\item[\textbf{\textsc{Top Trading Cycles}}]

\item[]
Given a preference profile $P$, construct an allocation
as follows.

\item[\textbf{Step 1:}]
Construct a directed graph whose nodes are all agents and
objects. Each agent points to her favorite object, and each
object points to its owner. Since the graph is finite and
each node has exactly one outgoing edge, it contains at
least one cycle, and its cycles are disjoint. Execute all
cycles and remove their agents and objects. If any agents
remain, proceed to Step 2. Otherwise, terminate.

\item[\textbf{Step $t\geq2$:}]
Construct a directed graph whose nodes are the remaining
agents and objects. Each remaining agent points to her
favorite remaining object, and each remaining object points
to its owner. As in Step 1, the graph contains at least
one cycle, and its cycles are disjoint. Execute all cycles
and remove their agents
and objects. If any agents remain, proceed to Step $t+1$.
Otherwise, terminate.

\end{description}

We use $\varphi^{\mathrm{ttc}}$ to denote the TTC rule. It is
Pareto-efficient and strategy-proof \citep{Roth1982EL} and, at each preference profile, selects the unique core allocation \citep{Shapley1974JME,Roth1977JME}.

\section{Rationality axioms}\label{sec:rationality-axioms}

Suppose an allocation $\mu$ is proposed to the agents. For a coalition $X$, it is not \emph{rational} to accept the proposed allocation if its members can reallocate their endowments so that every member is weakly better off and at least one is strictly better off. In this sense, the core imposes a stringent rationality condition: the proposed allocation must be rational for every coalition. In this section, we consider weaker notions of rationality.

The rationality axioms below are defined for an allocation at a given preference profile. We say that a rule satisfies one of these axioms if, at every profile, the allocation selected by the rule satisfies that axiom. For example, a rule $\phi$ is individually rational if, for every profile $P$ and every agent $i$, $\phi_i(P)\,R_i\,o_i$.

For individuals, we consider two rationality axioms. Individual rationality requires an agent's assignment to be at least as good as her endowment. Top rationality requires that an agent be assigned her endowment whenever it is her top choice.

\begin{definition}[IR, TR]\label{def:ir-tr}
An allocation $\mu$ is:
\begin{enumerate}
    \item[(a)] \textbf{individually rational (IR)} at $P$ if, for each $i\in N$, $\mu(i)\,R_i\,o_i$.

    \item[(b)] \textbf{top rational (TR)} at $P$ if, for each $i\in N$ with $\topop_1(P_i)=o_i$, $\mu(i)=o_i$.
\end{enumerate}
\end{definition}

IR implies TR. For $n\geq3$, the converse fails: a TR allocation need not be IR. The following proposition illustrates how much less restrictive TR can be by comparing the numbers of allocations satisfying the two axioms.

Let $\IR(P)$ and $\TR(P)$ denote, respectively, the sets of IR and TR allocations at profile $P\in\mathcal{P}^n$.

\begin{proposition}\label{prop:tr-weakerthan-ir}
For $n\geq2$, there exists $P\in\mathcal{P}^n$ such that
\[
\frac{|\TR(P)|}{|\IR(P)|}=\frac{n!}{2}.
\]
\end{proposition}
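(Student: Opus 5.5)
We will prove the statement with an explicit profile in which no agent ranks her own endowment first, so that TR puts no restriction at all, while IR is as restrictive as possible. The natural choice is a profile at which every agent ranks her own endowment second to last and one particular object last. Concretely, for $n\geq2$, let each agent $i$ rank her own endowment $o_i$ in position $n-1$ and some other object last. We choose the last-ranked objects so that exactly two allocations are IR. Since $\topop_1(P_i)\neq o_i$ for every $i$, TR is vacuous and $|\TR(P)|=n!$. The claim then reduces to showing $|\IR(P)|=2$.

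Under such a profile, IR at $P$ says only that no agent $i$ receives her last-ranked object, call it $\ell_i\neq o_i$. We pick the $\ell_i$ so that the number of bijections avoiding all these forbidden pairs equals $2$. The key is that the forbidden pairs $(i,\ell_i)$ together with the "diagonal" pairs $(i,o_i)$, which are allowed, should force the allocation. Letting $\ell_i=o_{i+1}$ (indices mod $n$) does not work, since only $n$ entries are then forbidden.

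So we instead move the endowment higher. Let each agent $i$ rank $o_i$ second, with top choice $o_{i+1}$ (indices mod $n$). Then TR is still vacuous, so $|\TR(P)|=n!$. IR requires $\mu(i)\in\{o_{i+1},o_i\}$ for every $i$. A bijection with $\mu(i)\in\{o_i,o_{i+1}\}$ for all $i$ is either the identity or the full cyclic shift. To see this, suppose some agent receives $o_{i+1}$. Then agent $i+1$ cannot receive $o_{i+1}$, so she receives $o_{i+2}$, and by propagation around the cycle everyone shifts. Hence $|\IR(P)|=2$, and the ratio is $n!/2$. For $n=2$ this gives two IR allocations out of $2$, so the ratio is $1=2!/2$, consistent with the statement.

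The only step that needs care is the counting argument that the IR allocations are exactly the identity and the cyclic shift. This is a short propagation argument around the cycle $1\to2\to\cdots\to n\to1$ and is the main (mild) obstacle.
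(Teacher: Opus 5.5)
Your final argument is correct and is essentially the paper's proof: the same cyclic profile with $o_{i+1}$ first and $o_i$ second makes TR vacuous, and a propagation around the cycle shows that the only IR allocations are the identity and the cyclic shift. The paper propagates backward from an agent who keeps her endowment, while you propagate forward from an agent who trades; the two are equivalent, and the exploratory first paragraph (endowment ranked second-to-last) is unnecessary and can be deleted.
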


\begin{proof}
Consider a profile $P$ with
\[
P_i:o_{i+1}\succ o_i\succ\cdots
\quad\text{for }i=1,\ldots,n-1,
\qquad
P_n:o_1\succ o_n\succ\cdots,
\]
where the remaining objects are ranked arbitrarily.

Since no agent ranks her endowment first, all $n!$
allocations satisfy TR. Under IR, each agent can receive
only her top choice or her endowment. If an agent receives
her endowment, the preceding agent in the cycle
$(1,\ldots,n)$ cannot receive her top choice and must
therefore receive her own endowment. Repeating this
argument around the cycle shows that either every agent
receives her endowment or every agent receives her top
choice. Thus, there are exactly two IR allocations:
$[o_1,o_2,\ldots,o_n]$ and $[o_2,o_3,\ldots,o_n,o_1]$.
Therefore, $|\TR(P)|/|\IR(P)|=n!/2$.
\end{proof}

Although TR can be much weaker than IR, the following proposition shows that TR implies IR when combined with SP.

\begin{proposition}\label{prop:sp-tr-implies-ir}
  A rule that satisfies SP and TR also satisfies IR.
\end{proposition}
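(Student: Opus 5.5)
We argue by contradiction, using a single deviation to a report that puts the endowment first. Suppose $\phi$ satisfies SP and TR, and that at some profile $P$ some agent $i$ has $o_i\,P_i\,\phi_i(P)$, so IR fails for $i$. Let $\tilde P_i$ be any preference relation with $\topop_1(\tilde P_i)=o_i$; for concreteness, move $o_i$ to the top of $P_i$ and keep the other objects in their $P_i$-order.

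At the profile $(\tilde P_i,P_{-i})$, agent $i$ ranks her endowment first. TR then forces $\phi_i(\tilde P_i,P_{-i})=o_i$, and this holds whatever the other agents report.

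Now apply SP at the true profile $P$ with the misreport $\tilde P_i$. It gives $\phi_i(P)\,R_i\,\phi_i(\tilde P_i,P_{-i})=o_i$. This contradicts $o_i\,P_i\,\phi_i(P)$. Hence $\phi_i(P)\,R_i\,o_i$ for every profile $P$ and every agent $i$, so $\phi$ satisfies IR.

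The argument has essentially no obstacle. The only point needing care is that TR is applied at the deviation profile $(\tilde P_i,P_{-i})$, not at $P$: TR constrains only agents who rank their endowment first, and $i$ does so only under $\tilde P_i$. SP then transfers the guarantee of $o_i$ back to the true profile. No other agents' preferences are modified, so no restriction on $P_{-i}$ is needed.
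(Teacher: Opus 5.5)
Your proof is correct and is the paper's argument: assuming $o_i\,P_i\,\phi_i(P)$, you have agent $i$ deviate to a report ranking $o_i$ first, TR forces $\phi_i(\tilde P_i,P_{-i})=o_i$, and this contradicts SP at $P$. You also correctly note that TR is applied at the deviation profile rather than at $P$.
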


\begin{proof}
Suppose that $\phi$ violates IR at some profile $P$,
so that $o_i\,P_i\,\phi_i(P)$ for some agent $i$.
Consider the report $\tilde{P}_i$ that ranks $o_i$ first.
By TR, $\phi_i(\tilde{P}_i,P_{-i})=o_i$, which agent $i$
prefers to $\phi_i(P)$. This is a profitable manipulation,
contradicting SP.
\end{proof}

One way to read \autoref{prop:sp-tr-implies-ir} is as a demonstration of the strength of SP: even when paired with a weak axiom, it entails a substantially stronger one.

For pairs of agents, we consider three rationality axioms. Pair rationality requires that no pair of agents can become better off in the Pareto sense by exchanging their endowments, compared with their proposed assignments. Top-top rationality requires two agents to exchange their endowments whenever each ranks the other's endowment first. Top2-top rationality requires that, when one agent ranks the other's endowment second and the other ranks the first agent's endowment first, either the first agent receives her top choice or the two agents exchange their endowments.

\begin{definition}[PR, TTR, T2TR]\label{def:pr-ttr-t2tr}
An allocation $\mu$ is:
\begin{enumerate}
    \item[(a)] \textbf{pair rational (PR)} at $P$
    if there are no distinct agents $i,j\in N$ such that $o_j\,P_i\,\mu(i)$ and $o_i\,R_j\,\mu(j)$.

    \item[(b)] \textbf{top-top rational (TTR)} at $P$ if, for every two distinct agents $i,j\in N$, whenever $\topop_1(P_i)=o_j$ and $\topop_1(P_j)=o_i$, we have $\mu(i)=o_j$ and $\mu(j)=o_i$.

    \item[(c)] \textbf{top2-top rational (T2TR)} at $P$
    if, for every ordered pair $(i,j)\in N^2$ with
    $i\neq j$, whenever $\topop_2(P_i)=o_j$ and
    $\topop_1(P_j)=o_i$, either
    $\mu(i)=\topop_1(P_i)$ or both
    $\mu(i)=o_j$ and $\mu(j)=o_i$.
\end{enumerate}
\end{definition}

Together, IR and PR are exactly the core requirement restricted to coalitions of size at most two. For a singleton $\{i\}$, the only reallocation assigns $o_i$ to agent $i$, and it blocks $\mu$ precisely when $o_i\,P_i\,\mu(i)$; ruling this out is IR. For a pair $\{i,j\}$, the reallocation that leaves both endowments in place blocks only if IR fails, while the exchange of endowments blocks precisely when PR fails.

While PR implies both TTR and T2TR,\footnote{Suppose $\mu$ is PR. For TTR, let $\topop_1(P_i)=o_j$ and $\topop_1(P_j)=o_i$. If $\mu(i)\neq o_j$, then $o_j\,P_i\,\mu(i)$, while $o_i\,R_j\,\mu(j)$ holds because $o_i$ is agent $j$'s top choice; so $i$ and $j$ block, a contradiction. Hence $\mu(i)=o_j$. If $\mu(j)\neq o_i$, then $o_i\,P_j\,\mu(j)$ and $o_j\,R_i\,\mu(i)$, the latter with equality; again $i$ and $j$ block. Hence $\mu(j)=o_i$. For T2TR, let $\topop_2(P_i)=o_j$ and $\topop_1(P_j)=o_i$, and suppose $\mu(i)\neq\topop_1(P_i)$. If also $\mu(i)\neq o_j$, then $\mu(i)$ is ranked below $o_j$ by agent $i$, so $o_j\,P_i\,\mu(i)$, while $o_i\,R_j\,\mu(j)$ holds as before; so $i$ and $j$ block. Hence $\mu(i)=o_j$, and the argument just given yields $\mu(j)=o_i$.} the following example shows that TTR and T2TR are logically independent.

\begin{example}[TTR vs T2TR]\label{ex:ttr-t2tr}
Suppose there are three agents. Consider the profiles
\[
P^*:\!
\left\{
\begin{aligned}
P_1^* &: o_2 \succ o_3 \succ \boxed{o_1},\\
P_2^* &: o_3 \succ o_1 \succ \boxed{o_2},\\
P_3^* &: o_1 \succ o_2 \succ \boxed{o_3},
\end{aligned}
\right.
\qquad
P^+:\!
\left\{
\begin{aligned}
P_1^+ &: \boxed{o_2} \succ o_1 \succ o_3,\\
P_2^+ &: o_1 \succ \boxed{o_3} \succ o_2,\\
P_3^+ &: \boxed{o_1} \succ o_3 \succ o_2.
\end{aligned}
\right.
\]

Let $\mu^*$ and $\mu^+$ be the allocations indicated
with boxes at $P^*$ and $P^+$, respectively.
At $P^*$, TTR holds vacuously because no two agents rank
each other's endowments first. However, T2TR fails:
agent 1 ranks $o_3$ second and agent 3 ranks $o_1$ first,
yet agent 1 receives neither her top choice nor $o_3$.

At $P^+$, T2TR holds vacuously because no ordered pair
satisfies its premise. However, TTR fails: agents 1 and 2
rank each other's endowments first, yet agent 2 receives
$o_3$.

To establish independence for rules, consider the following rules:
\[
\phi^*(P)=
\begin{cases}
\mu^*, & \text{if } P=P^*,\\
\varphi^{\mathrm{ttc}}(P), & \text{otherwise},
\end{cases}
\qquad
\phi^+(P)=
\begin{cases}
\mu^+, & \text{if } P=P^+,\\
\varphi^{\mathrm{ttc}}(P), & \text{otherwise}.
\end{cases}
\]
The rule $\phi^*$ satisfies TTR but violates T2TR, whereas $\phi^+$ satisfies T2TR but violates TTR.
\end{example}

Although PR implies both TTR and T2TR, their conjunction does not imply PR in general. Comparing the sizes of the corresponding sets of allocations at a fixed profile illustrates that TTR and T2TR, even jointly, are much weaker than PR.

Let $\TTR(P)$, $\T2TR(P)$, and $\PR(P)$ denote,
respectively, the sets of TTR, T2TR, and PR allocations
at profile $P\in\mathcal{P}^n$.

\begin{proposition}\label{prop:ttr-t2tr-weakerthan-pr}
For $n\geq2$, there exists $P\in\mathcal{P}^n$ such that
\[
\frac{|\TTR(P)\cap\T2TR(P)|}{|\PR(P)|}=(n-2)!.
\]
\end{proposition}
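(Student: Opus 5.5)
The plan is to find a profile at which TTR and T2TR pin down very little, while PR pins down almost everything. Concretely, I aim for $|\TTR(P)\cap\T2TR(P)|=(n-2)!$ and $|\PR(P)|=1$.

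\textbf{The profile.} I take
\[
P_1:o_2\succ o_1\succ\cdots,\qquad P_2:o_1\succ o_2\succ\cdots,\qquad P_i:o_1\succ o_2\succ o_3\succ o_4\succ\cdots\succ o_n\quad(i\geq3).
\]
Agents $i\geq3$ share this common order, in which each of them ranks her own endowment $o_i$ somewhere after $o_1,o_2$. For $n=2$ both sides of the claimed equality equal $1$ (only the swap survives), so I assume $n\geq3$.

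\textbf{Counting $\TTR(P)\cap\T2TR(P)$.}
\begin{itemize}
\item \emph{TTR.} The only mutual-top pair is $\{1,2\}$, so TTR forces $\mu(1)=o_2$ and $\mu(2)=o_1$ and says nothing else.
\item \emph{T2TR.} Its premise needs $\topop_1(P_j)=o_i$ and $\topop_2(P_i)=o_j$. The first-ranked objects are $o_2$ for agent $1$ and $o_1$ for every other agent.
 \begin{itemize}
 \item If $j=1$, then $i=2$, but $\topop_2(P_2)=o_2\neq o_1$.
 \item If $j\geq2$, then $i=1$, but $\topop_2(P_1)=o_1$ is not $o_j$ for any $j\neq1$.
 \end{itemize}
 So T2TR is vacuous at $P$.
\end{itemize}
Hence $\TTR(P)\cap\T2TR(P)$ is exactly the set of allocations that swap $o_1,o_2$ between agents $1,2$. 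These allocations permute $\{o_3,\ldots,o_n\}$ among agents $3,\ldots,n$ arbitrarily, giving $(n-2)!$ of them.

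\textbf{Showing $|\PR(P)|=1$.} The TTC allocation (swap $1,2$, everyone else keeps her endowment) is a core allocation, so it is PR, and $\PR(P)\neq\emptyset$. Since PR implies TTR (footnote above), every PR allocation swaps agents $1$ and $2$. It remains to show that a nonidentity permutation $\sigma$ of $\{o_3,\ldots,o_n\}$ among agents $3,\ldots,n$ admits a blocking pair $(i,j)$, meaning $o_j\,P_i\,\mu(i)$ and $o_i\,R_j\,\mu(j)$.

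The natural choice is $j=\mu^{-1}(o_i)$, so that $o_i\,R_j\,\mu(j)$ holds with equality. This pair blocks once $o_j\,P_i\,\mu(i)$, which under the common order means $j<$ the index of $\mu(i)$.

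To find such an $i$, I look at a nontrivial cycle of $\sigma$, listed as $i_1\to i_2\to\cdots$ with $\mu(i_r)=o_{i_{r+1}}$. For $i=i_{r}$ the relevant $j$ is $i_{r-1}$, so I need some $r$ with $i_{r-1}<i_{r+1}$. This looks automatic for cycles of length at least $3$: choose $r$ so that $i_{r-1}$ is the minimum index on the cycle. Then $i_{r+1}\neq i_{r-1}$, hence $i_{r+1}>i_{r-1}$.

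\textbf{Main obstacle: $2$-cycles.} If $\sigma$ swaps two agents $a<b$, then $\mu(a)=o_b$ and $\mu(b)=o_a$, and the pair $\{a,b\}$ itself does not block.
\begin{itemize}
\item \emph{First attempt.} I would look for a third agent $c$ to serve as $j$ for $i=b$, which needs $o_c\,P_b\,o_a$ and $o_b\,R_c\,\mu(c)$. The only candidates with $o_c\,P_b\,o_a$ are $c\in\{1,2\}$ or $c<a$. Agents $1,2$ fail, because each ranks her partner's endowment first and gets it. An agent $c<a$ with $\mu(c)$ ranked below $o_b$ works.
\item \emph{Fallback if the common order cannot exclude all $2$-cycles.} I would modify the tails so that each agent $i\geq3$ ranks her own endowment third ($o_1\succ o_2\succ o_i\succ\cdots$), with the remaining objects in a fixed common order. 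Then any moved agent $i$ has $o_i\,P_i\,\mu(i)$, and in a swap of $a,b$ the better-off member's counterpart should yield a block.
\end{itemize}
After any such modification I must re-check that TTR and T2TR remain as computed, i.e.\ that the second-ranked objects of agents $i\geq3$ are still $o_2$. I expect this PR-uniqueness step, especially ruling out $2$-cycles, to be the only delicate part of the argument.
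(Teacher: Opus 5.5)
Your TTR/T2TR count is correct. The step $|\mathrm{PR}(P)|=1$, however, is false at your profile for every $n\ge4$, and neither of your proposed repairs fixes it.

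Take $n=4$ and $\mu=[o_2,o_1,o_4,o_3]$. Agents 1 and 2 hold their top choices, so neither can be the agent $i$ with $o_j\,P_i\,\mu(i)$. Neither can be the partner $j$ either, since $o_i\,R_j\,\mu(j)$ would force $o_i=\mu(j)\in\{o_1,o_2\}$. The pair $\{3,4\}$ cannot block, because exchanging endowments gives them exactly what they already hold. Hence $\mu$ is PR, $|\mathrm{PR}(P)|=2$, and the ratio is $1\neq 2!$.

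More generally, at your profile a single transposition of agents $a<b$ in $\{3,\dots,n\}$, with all other agents fixed, can be blocked only through a third agent strictly between $a$ and $b$. So every transposition of consecutive agents $a,a+1\ge3$ is PR. This is why your first attempt cannot succeed: for such swaps no blocking pair exists at all, whatever agent $c$ you look for.

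The fallback makes things worse. If each agent $i\ge3$ ranks $o_i$ third, a fixed agent $k$ already holds her favourite object among $o_3,\dots,o_n$. As the improving agent, she wants only $o_1$ or $o_2$, whose owners will not trade. As a partner, she would need $o_a\,R_k\,o_k$, which fails. So every transposition of two agents $\ge3$ becomes PR. The observation that moved agents satisfy $o_i\,P_i\,\mu(i)$ is beside the point: PR, unlike IR, needs a willing partner. For a $2$-cycle, the partner's exchange of endowments just reproduces the current assignment.

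The missing idea is a profile in which PR \emph{cascades}. Once agents 1 and 2 are forced to swap, the next two agents should rank each other's endowments first among the remaining objects. Then the argument showing that PR implies TTR applies again to the residual market, forcing that pair to swap, and so on. The paper does this with a paired profile:
\begin{itemize}
\item odd agents rank $o_2\succ o_1\succ o_4\succ o_3\succ\cdots$;
\item even agents rank $o_1\succ o_2\succ o_3\succ o_4\succ\cdots$;
\item if $n$ is odd, $o_n$ is ranked last by everyone.
\end{itemize}
All first and second choices still lie in $\{o_1,o_2\}$. The same computation you did therefore shows that TTR binds only on agents 1 and 2 and that T2TR is vacuous, giving $(n-2)!$. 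Meanwhile PR forces agents $2k+1$ and $2k+2$ to swap for every $k$, so the PR allocation is unique. You could equally keep your heads $o_1\succ o_2$ for agents $i\ge3$ and give them such a paired tail.
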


\begin{proof}
Group the objects into successive pairs
$(o_1,o_2),(o_3,o_4),\ldots$.
All agents rank every object in an earlier pair above
every object in a later pair. Within each pair,
odd-indexed agents prefer the even-indexed object,
while even-indexed agents prefer the odd-indexed object.
If $n$ is odd, the unpaired object $o_n$ is ranked last
by every agent. Thus, the rankings follow the pattern
\[
P:\!
\left\{
\begin{aligned}
P_i &: o_2\succ o_1\succ o_4\succ o_3\succ\cdots,
&& \text{if }i\text{ is odd},\\
P_i &: o_1\succ o_2\succ o_3\succ o_4\succ\cdots,
&& \text{if }i\text{ is even},
\end{aligned}
\right.
\]
where the displayed pattern includes only complete
pairs present in $O$, followed by $o_n$ if $n$ is odd.

Agents 1 and 2 are the only pair who rank each other's
endowments first. Moreover, no ordered pair satisfies
the premise of T2TR. Thus, an allocation satisfies both
TTR and T2TR if and only if agents 1 and 2 receive each
other's endowments. The remaining objects can be assigned
arbitrarily to the remaining agents, so
\[
|\TTR(P)\cap\T2TR(P)|=(n-2)!.
\]

We next show that there is exactly one PR allocation. Pair rationality requires agents 1 and 2 to receive each other's endowments; otherwise, exchanging their endowments would make both weakly better off and at least one strictly better off. Once these assignments are fixed, agents 3 and 4, if present, rank each other's endowments first among the remaining objects. PR therefore requires them to exchange their endowments as well. Continuing in this way, PR requires agents $i$ and $i+1$ to exchange their endowments for every odd $i<n$. If $n$ is odd, agent $n$ receives $o_n$.

This allocation indeed satisfies PR. Agents within each pair already receive each other's endowments. For agents belonging to different pairs, the agent in the earlier pair prefers her assignment to the other agent's endowment. The same argument applies to a pair involving the unpaired agent $n$, if present. Hence, no two agents can exchange their endowments so that both are weakly better off and at least one is strictly better off. Therefore, $|\PR(P)|=1$, and the desired ratio follows.
\end{proof}

In \autoref{prop:sp-tr-implies-ir}, we showed that although TR is much weaker than IR, it implies IR when combined with SP. An analogous result does not hold for TTR and T2TR: even when combined with SP, their conjunction does not imply PR.

\begin{proposition}\label{prop:sp-ttr-t2tr-notimplies-pr}
  There exists a rule that satisfies SP, TTR, and T2TR but violates PR.
\end{proposition}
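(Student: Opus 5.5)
The rule must violate TR. Otherwise it would satisfy SP, TR, TTR and T2TR, and by the main characterization (\autoref{thm:characterization}) it would be TTC, which satisfies PR. So I look for a strategy-proof rule that ignores one agent's endowment claim. The cleanest source of such rules is hierarchical exchange (TTC with inheritance, in the sense of Pápai), which is strategy-proof.

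\emph{Candidate rule.} Fix $n\geq3$ and run TTC with one change to ownership. At the start, agent $n$ owns nothing. Object $o_n$ is owned by some other agent, say agent $2$. When agent $2$ leaves, $o_n$, if still present, passes to agent $n$. In every round, each remaining agent points to her favorite remaining object, and each remaining object points to its current owner. Cycles are executed as in TTC.

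\emph{Strategy-proofness.} Ownership changes depend only on which agents have already left. So this is a hierarchical exchange rule, and SP follows from the standard argument for TTC. Once an agent can point to an object, that object stays available to her until she leaves. Hence her report only affects when, not whether, she obtains any object she could otherwise get.

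\emph{Violations.} To break TR and PR I need a single profile. Let agent $n$ rank $o_n$ first, and let agent $2$ rank $o_n$ first as well. Then $o_n$ points to agent $2$, agent $2$ points to $o_n$, and agent $2$ takes $o_n$ in Step 1. This violates TR for agent $n$. It also violates PR for the pair $\{2,n\}$ if agent $2$ ranks her own endowment second and agent $n$ is indifferent only about $o_2$ versus her assignment. I will instead pick the simpler witness: agent $n$ receives something she ranks below $o_n$ while some agent $j$ ranks $o_n$ first and $o_j$ last. Then $\{j,n\}$ can exchange endowments to a Pareto improvement, so PR fails. In any case an IR violation is guaranteed by construction.

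\emph{Main obstacle: TTR and T2TR.} These must hold at every profile, and they can fail exactly for pairs involving agent $n$ or the owner of $o_n$. For example, if agent $i$ ranks $o_n$ first and agent $n$ ranks $o_i$ first, TTR demands that $i$ and $n$ swap. But $o_n$ points to agent $2$, so the $i\to o_n\to 2$ chain need not close with $n$. I will do the case analysis over pairs $(i,j)$ according to whether $n$ and/or $2$ belongs to the pair.

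If the general construction fails these checks, my first repair is to let the owner of $o_n$ defer to agent $n$ on the relevant premises. That is, whenever $i$ and $n$ rank each other's endowments top (or top-2/top, for T2TR), the rule executes that swap first. I would then re-verify SP by checking that these premise conditions are monotone in the reports. As a fallback I would take $n=3$ and define the rule explicitly: TTC everywhere except on a small, carefully chosen set of profiles where agent $3$'s top-ranked endowment is ignored. I would check SP by exhausting the deviations that lead into or out of that set, and check TTR and T2TR directly there, exactly in the style of \autoref{ex:ttr-t2tr}.
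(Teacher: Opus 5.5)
\textbf{There is a genuine gap: the candidate rule fails TTR, so the existence claim is not established.} Your opening reduction is right. By \autoref{thm:characterization}, any rule with SP, TTR and T2TR that violates PR must violate TR, and the paper's witness does. The problem is in the construction.

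Take $n=3$ with $o_3$ initially owned by agent $2$, and let $P_1:o_3\succ\cdots$, $P_2:o_3\succ\cdots$, $P_3:o_1\succ\cdots$. At Step 1 agent $2$ points to $o_3$, which points back to her, so she takes $o_3$ in a one-agent cycle. TTR, however, requires $\mu(1)=o_3$ and $\mu(3)=o_1$.

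This is not a quirk of your choice of owner. For $n\geq 3$, the whole class you use to get SP is ruled out:
\begin{itemize}
\item If some $o_m$ is initially owned by $k\neq m$, let $k$ and a third agent $l$ both rank $o_m$ first, and let $m$ rank $o_l$ first. Then $k$ takes $o_m$ at Step 1, and TTR fails for $\{l,m\}$.
\item If every $o_m$ is initially owned by $m$, each executed cycle removes its members' endowments. Inheritance never applies, and the rule is TTC, which satisfies PR.
\end{itemize}
Two smaller problems. First, your rule leaves $o_2$ without an owner once agent $2$ leaves with $o_n$. Second, your PR witness does not work. An IR violation is not a PR violation, and for $\{j,n\}$ to block you need $o_j\,R_n\,\mu(n)$, which nothing in your conditions ensures.

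Hence SP cannot be imported from P\'apai's theorem, and the proof would have to come from your fallbacks, which you do not carry out. The repair leaves the hierarchical class, and read literally its T2TR part already breaks SP. At $P_1:o_2\succ o_3\succ o_1$, $P_2:o_3\succ o_2\succ o_1$, $P_3:o_1\succ o_3\succ o_2$, the premise for $(1,3)$ forces the swap, so agent $1$ gets $o_3$. If she instead reports $o_2\succ o_1\succ o_3$, no override is triggered. Agent $2$ then takes $o_3$ at Step 1, and agent $1$ obtains $o_2$ at Step 2, whoever inherits it.

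The paper's proof is precisely your second fallback done in full. It cites the explicit three-agent rule $\phi^{\mathrm{TR}}$ of \autoref{ex:fails-tr-pr}, defined by a table. It verifies SP by computing each agent's menu given the others' reports, and verifies TTR and T2TR by checking every pair. It then exhibits the PR violation at $P_1:o_1\succ o_2\succ o_3$, $P_2:o_2\succ o_1\succ o_3$, $P_3:o_1\succ o_2\succ o_3$. There the rule selects $[o_2,o_3,o_1]$, and agent $2$ strictly gains from exchanging endowments with agent $1$ while agent $1$ is unaffected. Without such a rule, your proposal does not establish existence.
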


\begin{proof}
The rule $\phi^{\mathrm{TR}}$ in \autoref{ex:fails-tr-pr} in the Appendix satisfies SP, TTR, and T2TR but violates PR.
\end{proof}

Together, \autoref{prop:sp-tr-implies-ir} and \autoref{prop:sp-ttr-t2tr-notimplies-pr} suggest the following interpretation: when attention is confined to SP rules, TR simply boils down to a reformulation of IR; the conjunction of TTR and T2TR, however, is not a reformulation of PR but imposes a strictly weaker requirement.

\section{Main result}\label{sec:main-result}

The rationality axioms can leave substantial discretion at a given profile,
as \autoref{prop:tr-weakerthan-ir} and \autoref{prop:ttr-t2tr-weakerthan-pr}
illustrate. The next theorem shows that imposing these requirements throughout
the preference domain, together with SP, determines the rule completely.

\begin{theorem}\label{thm:characterization}
TTC is the unique rule that satisfies SP, TR, TTR, and T2TR.
\end{theorem}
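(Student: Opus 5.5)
The plan is to check existence first and then prove uniqueness by reducing to profiles with a very simple structure.

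\textbf{Existence.} TTC satisfies SP \citep{Roth1982EL}. It selects the core allocation, so it is IR and PR. IR implies TR, and PR implies TTR and T2TR, as shown in the footnote following \autoref{def:pr-ttr-t2tr}.

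\textbf{Reduction.} For uniqueness, fix a rule $\phi$ satisfying SP, TR, TTR, and T2TR. By \autoref{prop:sp-tr-implies-ir}, $\phi$ is IR. Suppose $\phi(P)\neq\varphi^{\mathrm{ttc}}(P)$ for some $P$. I would choose such a counterexample minimizing a size measure, for instance
\[
\sum_{i\in N}\bigl|\{o\in O: o\,P_i\,o_i\}\bigr|,
\]
the total number of objects agents rank above their endowments.

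The aim is to show that a minimal counterexample has the \emph{reduced} form $P_i:\varphi^{\mathrm{ttc}}_i(P)\succ o_i\succ\cdots$ for every $i$. This is the standard Ma/Sethuraman-style argument. Suppose some agent $i$ ranks an object other than $\varphi^{\mathrm{ttc}}_i(P)$ strictly between $\varphi^{\mathrm{ttc}}_i(P)$ and $o_i$, or ranks something above $\varphi^{\mathrm{ttc}}_i(P)$. Let $P'_i$ move $\varphi^{\mathrm{ttc}}_i(P)$ to the top and $o_i$ to second place. Then TTC is unchanged, because TTC is invariant under such monotonic transformations at the assigned object. The measure also weakly falls, and strictly falls unless $P_i$ is already reduced.

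By minimality, $\phi(P'_i,P_{-i})=\varphi^{\mathrm{ttc}}(P)$, so agent $i$ obtains $\varphi^{\mathrm{ttc}}_i(P)$ at the modified profile. SP for agent $i$ at $P$ (deviating to $P'_i$), together with SP at $(P'_i,P_{-i})$ (deviating back to $P_i$) and IR, pins down $\phi_i(P)=\varphi^{\mathrm{ttc}}_i(P)$. An argument over agents, tracking which agents could still differ, then yields $\phi(P)=\varphi^{\mathrm{ttc}}(P)$. Some care is needed so that the measure choice makes every agent's step a strict decrease; I would instead use lexicographic minimality over agents if the sum fails.

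\textbf{Reduced profiles.} At a reduced profile, the TTC allocation is a union of all-tops cycles, and each agent ranks her endowment second. By IR, each agent receives her top choice or her endowment. The argument in the proof of \autoref{prop:tr-weakerthan-ir} then shows that within each cycle either everyone trades or no one does. Cycles of size $1$ are handled by TR, and cycles of size $2$ by TTR.

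For a cycle $(i_1,i_2,i_3)$ of size $3$ in which nobody trades, let $i_2$ report $o_{i_3}\succ o_{i_1}\succ o_{i_2}$. The premise of T2TR then holds for the ordered pair $(i_2,i_1)$. Hence either $i_2$ gets $o_{i_3}$, or $i_1$ and $i_2$ swap endowments. In the first case, IR forces the full cycle to trade. In the second case, $i_2$ receives $o_{i_1}$. Either way $i_2$ strictly gains over her endowment, contradicting SP.

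\textbf{Main obstacle.} I expect the main difficulty to be all-tops cycles of size $s\ge 4$, where T2TR's premise is not directly available. Here I would run an induction on $s$ using the enlarged class of profiles in which agent $i_{s-1}$ ranks $o_{i_s}\succ o_{i_1}\succ o_{i_{s-1}}$ and everyone else is as before. Under this report, $i_{s-1}$ is in effect offered a shortcut to $i_1$, creating a cycle of size $s-1$ among $i_1,\ldots,i_{s-1}$ at the second rank. I would try to show, by induction on the number of agents in the cycle, that $\phi$ must give $i_{s-1}$ an object strictly better than $o_{i_{s-1}}$ at such profiles. This would again contradict SP in the no-trade case.

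If this direct induction stalls, the fallback is to chain SP deviations along the cycle one agent at a time, so that at each stage T2TR applies to a consecutive pair whose second-ranked object closes a shorter cycle.
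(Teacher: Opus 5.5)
Your existence argument is fine, and your reduction follows the paper's Elevator Lemma~2 (Sethuraman's idea), but it overclaims. After you replace $P_i$ by $P'_i$ and invoke minimality, SP at $P$ yields only $\phi_i(P)\,R_i\,\varphi^{\mathrm{ttc}}_i(P)$. SP at $(P'_i,P_{-i})$ is vacuous, since $i$ already receives her reported top there, and IR adds nothing. Nothing controls the other agents' assignments at $P$ either, because you have no nonbossiness. So the ``argument over agents'' that \emph{every} agent is reduced is unsupported. What the argument legitimately gives is the top-two condition for \emph{conflict} agents only. That still produces an elevated cycle executed by TTC but not by $\phi$, with everyone else's preferences arbitrary, and this is all the paper uses.

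The genuine gap is the cycle argument itself. In your size-3 step, the second case gives no contradiction: if $i_1$ and $i_2$ swap, $i_2$ receives $o_{i_1}$, which her true preference $o_{i_3}\succ o_{i_2}\succ\cdots$ ranks \emph{below} her endowment. Moreover, none of IR, TTR or T2TR at the deviated profile excludes this swap. Your plan for $s\ge4$ has the same defect: obtaining $o_{i_1}$ through the shortcut is not a gain, so you would need $i_{s-1}$ to receive exactly $o_{i_s}$, and nothing forces that. These deviations also raise your measure, so minimality tells you nothing about $\phi$ at the profiles you need.

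The missing ideas run as follows.
\begin{enumerate}
\item Take $s$ to be the minimum size of an unexecuted elevated cycle over \emph{all} profiles; TTR gives $s\ge3$. Elevator Lemma~1 then shows that $\phi$ executes every all-tops cycle of size less than $s$ at \emph{every} profile.
\item This yields the trigger argument: an agent who could close an all-tops cycle of size $s-1$ by promoting her second choice must receive one of her top two objects.
\item Move the agents of the unexecuted cycle one at a time to $o_{i+1}\succ o_{i+2}\succ o_i$. The trigger, SP and IR together give $\phi_i(\hat P)=o_{i+2}$ for all $i$.
\item Finally let agent~2 report $o_3\succ o_1\succ\cdots$. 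SP denies her $o_3$, so T2TR forces agents~1 and~2 to swap. Agent $s$ is then left with neither $o_1$ nor $o_2$, contradicting the trigger.
\end{enumerate}
For $s=3$ the trigger follows from TTR alone, and the chain ends at a profile like $P^*$ of Example~1, where $\phi$ would have to select the reversed cycle, which T2TR rejects. Your size-3 case is therefore repairable, but only through such a chain of deviations, not a single one.
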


The proof is involved and is therefore deferred to \autoref{sec:proof}. The independence of the axioms is established in the \hyperref[sec:indep-of-axioms]{Appendix}. We conclude this section by presenting and discussing two corollaries of \autoref{thm:characterization}.

Since IR implies TR, and PR implies both TTR and T2TR, every rule satisfying
SP, IR, and PR satisfies the hypotheses of \autoref{thm:characterization}.
Moreover, TTC satisfies IR and PR because it selects core allocations.
This yields the following characterization in terms of the full rationality
conditions for individuals and pairs.

\begin{corollary}\label{cor:irpr}
  TTC is the unique rule that satisfies SP, IR, and PR.
\end{corollary}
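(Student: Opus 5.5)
The corollary follows directly from \autoref{thm:characterization}, so the plan is to check the two directions of uniqueness separately.

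\emph{Existence.} We show that TTC satisfies SP, IR, and PR. SP is the classical result of \citet{Roth1982EL} cited in \autoref{sec:ttc}. For IR and PR we use that $\varphi^{\mathrm{ttc}}(P)$ is the core allocation at $P$ \citep{Shapley1974JME}. The paper already notes that IR and PR together are exactly the core condition restricted to coalitions of size one and two.
\begin{itemize}
\item[] A violation of IR by agent $i$ means the singleton $\{i\}$ blocks with $\nu(i)=o_i$.
\item[] A violation of PR by a pair $i,j$, with $o_j\,P_i\,\mu(i)$ and $o_i\,R_j\,\mu(j)$, means the coalition $\{i,j\}$ blocks via the bijection $\nu(i)=o_j$, $\nu(j)=o_i$.
\end{itemize}
Neither can happen at a core allocation, so TTC satisfies IR and PR.

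\emph{Uniqueness.} Let $\phi$ satisfy SP, IR, and PR. We show it meets the hypotheses of \autoref{thm:characterization}.
\begin{itemize}
\item[] IR implies TR. If $\topop_1(P_i)=o_i$, then $\phi_i(P)\,R_i\,o_i$ forces $\phi_i(P)=o_i$.
\item[] PR implies TTR and T2TR, by the footnote argument following \autoref{def:pr-ttr-t2tr}.
\end{itemize}
Hence $\phi$ satisfies SP, TR, TTR, and T2TR, and \autoref{thm:characterization} gives $\phi=\varphi^{\mathrm{ttc}}$.

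There is no real obstacle here; all the difficulty sits inside \autoref{thm:characterization}. The only step needing care is the PR-to-core translation. PR allows the weak comparison $R_j$ for the second agent, which matches the core's "weakly better for all, strictly better for one" requirement: agent $i$ is the strictly improving member.
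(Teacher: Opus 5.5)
Your proposal is correct and follows the paper's own argument: TTC satisfies SP by \citet{Roth1982EL} and satisfies IR and PR because it selects core allocations. Uniqueness then follows because IR implies TR and PR implies TTR and T2TR, so \autoref{thm:characterization} applies.
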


The characterization also has implications when feasible trades are restricted.
Kidney exchange provides a motivation for bounding cycle length: coordinating
the simultaneous operations in an exchange cycle creates logistical demands
that increase with its size \citep{Roth2005JET}.
We capture this restriction by placing a cap on cycle size in our model.
The next corollary shows that such a cap is incompatible with SP and our three
rationality axioms, even without
requiring efficiency among the feasible allocations.

An allocation is a \textbf{$s$-way allocation} if every cycle it
executes has size at most $s$. A rule satisfies this axiom if it selects an $s$-way allocation at every profile.

\begin{corollary}\label{cor:restricted-cycles}
For $1\leq s<n$, there is no $s$-way exchange rule that satisfies SP, TR, TTR, and T2TR.
\end{corollary}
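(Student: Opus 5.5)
The plan is to argue by contradiction from \autoref{thm:characterization}. Fix $1\leq s<n$ and suppose some rule $\phi$ is an $s$-way exchange rule satisfying SP, TR, TTR, and T2TR. By \autoref{thm:characterization}, $\phi$ must coincide with $\varphi^{\mathrm{ttc}}$ at every profile. It therefore suffices to exhibit one profile at which TTC executes a cycle of size greater than $s$, since then $\phi=\varphi^{\mathrm{ttc}}$ is not an $s$-way allocation there.

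The natural profile is the one from the proof of \autoref{prop:tr-weakerthan-ir}. For $i=1,\ldots,n-1$, agent $i$ ranks $o_{i+1}$ first, and agent $n$ ranks $o_1$ first; all other rankings are arbitrary.

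At Step 1 of TTC, each agent points to her top object, so the graph contains the single all-tops cycle $(1,2,\ldots,n)$. This cycle has size $n>s$, and TTC executes it, assigning $o_{i+1}$ to agent $i$ and $o_1$ to agent $n$.

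Hence $\varphi^{\mathrm{ttc}}$, and therefore $\phi$, is not an $s$-way allocation at this profile, which is a contradiction. I also want to confirm that no step silently uses $n\geq2$ or other restrictions. When $s<n$ we automatically have $n\geq2$, and the profile is well defined for every $n\geq2$.

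There is essentially no obstacle here; all the difficulty is contained in \autoref{thm:characterization}. The one point to check carefully is that an allocation's executed cycles are determined by the allocation itself. The relevant cycles are those of the permutation $i\mapsto j$ with $\mu(i)=o_j$, not those of the procedure's graph. Here the allocation $[o_2,\ldots,o_n,o_1]$ induces the single permutation cycle $(1,\ldots,n)$ of size $n$, so under either reading it is not $s$-way.
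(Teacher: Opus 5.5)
Your proof is correct and follows the paper's argument: \autoref{thm:characterization} forces $\phi=\varphi^{\mathrm{ttc}}$, and you then exhibit a profile at which TTC executes a cycle of size greater than $s$. The only cosmetic difference is the profile: the paper uses an all-tops cycle of size $s+1$ among agents $1,\dots,s+1$, with every other agent ranking her own endowment first, whereas you use the full $n$-cycle, which works equally well since $s<n$.
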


\begin{proof}
By \autoref{thm:characterization}, any such rule would coincide with TTC.
Consider a profile at which agents $1,\ldots,s+1$ form an all-tops cycle and
every remaining agent ranks her own endowment first. TTC executes the cycle of
size $s+1$, contradicting the restriction to $s$-way allocations.
\end{proof}

By the same implications,
\autoref{cor:restricted-cycles} also rules out $s$-way exchange rules
satisfying SP, IR, and PR. For two-way exchange, the
obstruction already arises at a single profile with three agents. At $P^*$ in
\autoref{ex:ttr-t2tr}, T2TR requires each agent to receive one of her top two
objects, both of which belong to the other agents. Any allocation satisfying
T2TR must therefore execute a cycle of size three. Thus, no two-way allocation satisfies
T2TR at this profile, and hence none satisfies PR. This particular obstruction
does not depend on SP or IR.

\section{Proof of Theorem~\ref{thm:characterization}}\label{sec:proof}

We first introduce the concept of an elevated cycle. We then establish two lemmas concerning cycles, which become useful when showing \autoref{thm:characterization}.

An all-tops cycle is an \textbf{elevated cycle} at $P$ if, in addition, each agent in the cycle ranks her own endowment second. The size of an elevated cycle is, by definition, at least two.

Observe that under an IR allocation, an elevated cycle is either executed or every agent in the cycle receives her endowment. To see this, consider an unexecuted elevated cycle. Then at least one agent in the cycle does not receive her top choice. By IR, she must receive her endowment, which she ranks second. Since her endowment is the previous agent's top choice, that agent cannot receive her top choice either and therefore must also receive her own endowment. Repeating this argument backward around the cycle shows that every agent in the cycle receives her endowment.

We establish two lemmas for rules satisfying SP and TR, which we call the elevator lemmas. \hyperref[lem:elevator1]{Elevator Lemma~\ref*{lem:elevator1}}, presented below, shows that the existence of an unexecuted all-tops cycle implies the existence of an unexecuted elevated cycle of the same size.

\begin{lemma}[Elevator Lemma 1]\label{lem:elevator1}
Let $\phi$ be a rule that satisfies TR and SP. For some profile $P$, if $\phi(P)$ does not execute an all-tops cycle of size $s$ at $P$, then there exists a profile $\hat{P}$ such that $\phi(\hat{P})$ does not execute an elevated cycle of size $s$ at $\hat{P}$.
\end{lemma}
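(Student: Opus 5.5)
Let $C=(i_1,\ldots,i_s)$ be the all-tops cycle at $P$ that $\phi(P)$ fails to execute. The plan is to change, one agent at a time, the preference of each cycle member $i_r$ to an elevated preference $\hat P_{i_r}$. This preference keeps $o_{i_{r+1}}$ first, puts $o_{i_r}$ second, and ranks the remaining objects arbitrarily. Preferences of agents outside $C$ stay unchanged. We keep the invariant that, at the current profile, $C$ is not executed. At the end, every member of $C$ reports $\hat P_{i_r}$, so $C$ is an elevated cycle at the final profile $\hat P$, and $C$ is still unexecuted there. That is exactly the claim.

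For the invariant, write $Q$ for the current profile, at which $C$ is all-tops (every top is unchanged) and not executed. Suppose we replace $Q_{i_r}$ by $\hat P_{i_r}$, giving $Q'$, and suppose $C$ were executed at $Q'$. We need a contradiction.

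Case 1: $\phi_{i_r}(Q)\neq o_{i_{r+1}}$. Then $i_r$ at true preference $Q_{i_r}$ gains by reporting $\hat P_{i_r}$, because at $Q'$ she receives her top $o_{i_{r+1}}$. This contradicts SP.

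Case 2: $\phi_{i_r}(Q)=o_{i_{r+1}}$. Then $i_r$ already gets her top at $Q$, but some other member of $C$ does not. This case needs a different order or a different deviation. The natural fix is to process agents in a good order: first choose a member $i_r$ whose assignment at $Q$ is not her top, and switch her. That is Case 1 for her, so $C$ stays unexecuted after the switch. To handle the remaining agents, use the reverse direction as well. If at $Q'$ the cycle were executed, consider $i_r$ with true preference $\hat P_{i_r}$ at $Q'$. Since she gets her top there, SP from that direction gives nothing, so I would rely on Case 1 whenever it is available.

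Agents whose assignment at $Q$ is their top need a separate argument, and this is where TR enters. By SP applied at $Q'$ with true preference $\hat P_{i_r}$, agent $i_r$ cannot obtain $o_{i_{r+1}}$ by reporting $Q_{i_r}$ unless she gets it at $Q'$. More directly, we compare $i_r$'s outcomes under her two reports. Both preferences share the same top, so SP in both directions implies that if she gets her top under one report, she gets it under the other. Hence $\phi_{i_r}(Q)=o_{i_{r+1}}$ if and only if $\phi_{i_r}(Q')=o_{i_{r+1}}$. The worry is that the other members' assignments could change. Even if they do, an agent $i_r$ who is not currently at her top stays off her top after switching. So I would track the non-executed witness differently: maintain some member that does not receive her top, and switch agents in an order that keeps such a witness.

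The main obstacle is exactly this: when we switch an agent $j$ who does receive her top, the witness $k$ (who does not) might start receiving her top. The remedy is to always switch a witness first. The witness keeps not receiving her top, by the equivalence just noted, so she remains a witness, and we can then switch everyone else without losing her. Once switched, $k$'s preference is fixed at $\hat P_k$, so later switches of other agents may affect her. To control this, I would use the elevated structure together with TR/IR. By Proposition~\ref{prop:sp-tr-implies-ir}, $\phi$ is IR. So once $k$ reports $\hat P_k$ and misses her top, she receives $o_k$. Then her predecessor in $C$ cannot get her top, and she receives $o_{i_{r-1}}$ only if... I would carry a backward propagation along already-switched agents, in the spirit of the observation preceding the lemma, and switch agents backward around the cycle starting from the witness. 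At each step the newly switched predecessor cannot obtain her top, because that top is $o_k$ or the endowment of an already-switched agent who is receiving her own endowment. By IR she then receives her own endowment. This keeps all switched agents unexecuted, and the induction closes.
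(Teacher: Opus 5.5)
Once the abandoned forward attempt is stripped away, your argument is correct and is the paper's proof. You pick a cycle member who misses her top, switch her to the elevated preference first, and then switch agents backward around the cycle. At each switch, SP (the top is unchanged) keeps the newly switched agent off her top, and IR (Proposition~\ref{prop:sp-tr-implies-ir}) then gives her her endowment, which denies her predecessor that predecessor's top. The paper relabels so that this member is agent $s$ and switches $s,s-1,\dots,1$.

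When you write it up, note that $s\ge2$ by TR, and drop the claim that the first witness can never be lost. Also make explicit that each step uses only the fact that the successor holds the predecessor's top at the \emph{pre}-switch profile. Earlier-switched agents' assignments may change, and at $\hat P$ all you need is that the last switched agent misses her top.
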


\begin{proof}
By \autoref{prop:sp-tr-implies-ir}, $\phi$ satisfies IR.

Let $s$ denote the size of an all-tops cycle that $\phi(P)$ does not execute. Since $\phi$ satisfies TR, $s\geq2$. Relabeling agents if necessary, write this cycle as $(1,\dots,s)$ and suppose that $\phi_s(P)\neq o_1$.

For each agent $i\in\{1,\dots,s\}$, define $\hat P_i$ by placing her top choice first and her endowment second, while preserving the relative order of all remaining objects. Thus,
\[
\hat P_i:o_{i+1}\succ o_i\succ\dots
\quad\text{for }i<s,
\qquad
\hat P_s:o_1\succ o_s\succ\dots.
\]
Starting from $P$, change the reports of agents $s,s-1,\dots,1$ in this order. Let
\[
\hat P^{s+1}=P,
\qquad
\hat P^i=(\hat P_{\{i,\dots,s\}},P_{N\setminus\{i,\dots,s\}})
\quad\text{for }i=1,\dots,s.
\]

Consider first agent $s$. She cannot receive $o_1$ at $\hat P^s$: if her report changes, receiving $o_1$ would constitute a profitable deviation from $P$; if it does not, her assignment is unchanged. IR therefore implies that $\phi_s(\hat P^s)=o_s$. By feasibility, $\phi_{s-1}(\hat P^s)\neq o_s$.

More generally, suppose that agent $i<s$ does not receive her top choice, $o_{i+1}$, at $\hat P^{i+1}$. When her report changes to $\hat P_i$, SP rules out assigning her $o_{i+1}$, since this would be a profitable deviation from $\hat P^{i+1}$. IR therefore implies that $\phi_i(\hat P^i)=o_i$. If $i>1$, feasibility then gives $\phi_{i-1}(\hat P^i)\neq o_i$, establishing the premise for the next step.

Continuing down to agent $1$, we obtain $\phi_1(\hat P^1)=o_1$. At $\hat P=\hat P^1$, the cycle $(1,\dots,s)$ is elevated but is not executed. By IR and the observation preceding the lemma, every agent in the cycle receives her endowment.
\end{proof}

For the next lemma, we need to introduce the concepts of a conflict profile and an elevated conflict profile.

A profile $P$ is a \textbf{conflict profile} for rules $\phi$ and $\psi$ if $\phi(P)\neq\psi(P)$. Observe that if $\phi\neq\psi$, then such a profile exists.

A profile $P$ is an \textbf{elevated conflict profile} for rules $\phi$ and $\psi$ if there is an elevated cycle at $P$ that one of the allocations $\phi(P)$ and $\psi(P)$ executes and the other does not.

The main insight behind our next lemma comes from
\citet{Sethuraman2016ORL}, although the lemma is not stated as a separate result in that paper.\footnote{
\citet{Ekici2024EL} later uses the same insight to provide an alternative proof of the TTC characterization in \citet{Ekici2024TE}.}
\hyperref[lem:elevator2]{Elevator Lemma~\ref*{lem:elevator2}}, presented below, shows that, for two rules satisfying SP and TR, the existence of a conflict profile implies the existence of an elevated conflict profile.

\begin{lemma}[Elevator Lemma 2]\label{lem:elevator2}
Let $\phi\neq\psi$ be two rules that satisfy SP and TR. Then there exists an elevated conflict profile for $\phi$ and $\psi$.
\end{lemma}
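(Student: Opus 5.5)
The plan is to pick a conflict profile that is extremal with respect to a potential, and to use SP to push the preferences of disagreeing agents into the ``top choice first, endowment second'' form. By \autoref{prop:sp-tr-implies-ir}, both $\phi$ and $\psi$ satisfy IR. For a profile $P$, let $r(P)=\sum_{i\in N} r_i(P)$, where $r_i(P)$ is the rank of $o_i$ under $P_i$. Since $\phi\neq\psi$, conflict profiles exist. Among them, choose $P$ that minimizes $r(P)$, breaking ties by a secondary potential to be fixed below. Let $D=\{i:\phi_i(P)\neq\psi_i(P)\}$, which is nonempty.

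\emph{Step 1 (local shape of $D$).} Take $i\in D$ and assume without loss of generality that $\phi_i(P)\,P_i\,\psi_i(P)$. Then $\phi_i(P)\neq o_i$ by IR. Let $P_i'$ rank $x=\phi_i(P)$ first and $o_i$ second, keeping the relative order of the other objects.
\begin{itemize}
\item[(i)] Applying SP to $\phi$ at $(P_i',P_{-i})$ gives $\phi_i(P_i',P_{-i})=x$, since reporting $P_i$ would secure $x$, her top choice under $P_i'$.
\item[(ii)] Applying SP to $\psi$ at $P$ gives that $\psi_i(P_i',P_{-i})\neq x$, since otherwise reporting $P_i'$ would be a profitable deviation at $P$.
\end{itemize}
So $(P_i',P_{-i})$ is again a conflict profile, and $r_i$ drops unless $o_i$ was already ranked second. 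By minimality, $o_i$ is ranked exactly second for every $i\in D$ (rank one is impossible, because TR would then force both rules to give $o_i$). IR then implies that each $i\in D$ receives her top choice $t_i$ under one rule and $o_i$ under the other. Since a better outcome is her top choice, (i) also shows that $t_i$ is the preferred assignment in the above argument.

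\emph{Step 2 (closing cycles).} Partition $D=D_\phi\cup D_\psi$ according to which rule gives the top choice; at least one part is nonempty, say $D_\phi$. Take $i\in D_\phi$ with $t_i=o_j$.
\begin{itemize}
\item If $\psi_j(P)=o_j$, then $j\in D$ because $\phi_j(P)\neq o_j$ (agent $i$ holds it under $\phi$). Hence $j\in D_\phi$ and the chain continues.
\item If every step lands in this case, following $i\mapsto j$ inside the finite set $D_\phi$ closes a cycle. Each agent on it ranks the next agent's endowment first and her own second, so it is an elevated cycle. It is executed by $\phi(P)$, while $\psi(P)$ gives every member her endowment, so it is not executed. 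Thus $P$ is an elevated conflict profile.
\end{itemize}

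\emph{Main obstacle.} The difficulty is the remaining case: $\psi_j(P)=y\neq o_j$, with $j$ possibly outside $D$, for instance when $\phi_j(P)=\psi_j(P)=y$. Changing $j$'s report to ``$y$ first, $o_j$ second'' keeps $j$'s assignment under both rules by SP, but it could destroy the conflict elsewhere.

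My first attempt is to refine the extremal choice. I would lexicographically minimize $r(P)$ and then the number of agents $j$ with $\phi_j(P)=\psi_j(P)$ whose preferences are not of the ``assignment first, endowment second'' form. I would then show that if such a move destroyed the conflict, SP applied to agent $j$ in the reverse direction, together with IR at the new profile, yields a contradiction. In that reverse step, $j$'s endowment being ranked second should force $o_j$ back to $j$ under one rule, contradicting $\phi_i(P)=o_j$.

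If this fails, the fallback is to follow \citet{Sethuraman2016ORL} more closely. Instead of $r$, I would use the potential that sums over agents the number of objects ranked above the assignment each rule gives her, and move agents $j$ outside $D$ as well. With either choice of potential, the real work is verifying that the conflict survives each such modification.
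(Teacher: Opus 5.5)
Your Step 1 is sound. Choosing a conflict profile that minimizes $r$ is a tidy substitute for the paper's agent-by-agent modification, since it spares you the check that already-modified agents keep the top-two form.

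The genuine gap is in Step 2. You leave open the case $\psi_j(P)\neq o_j$ and only sketch repairs: a refined potential, or re-ranking non-conflict agents. Their key step, that the conflict survives the modification, is never verified. No repair is needed, because that case is impossible. What you are missing is a simple feasibility argument about the non-conflict agents.

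Let $\bar N=N\setminus D$, and suppose $k\in\bar N$ with $\phi_k(P)=\psi_k(P)=o_l$.
\begin{itemize}
\item $l\notin D_\phi$: otherwise $\psi_l(P)=o_l$, and $\psi(P)$ would assign $o_l$ to both $l$ and $k\neq l$.
\item $l\notin D_\psi$: this holds by the symmetric argument with $\phi$.
\end{itemize}
So the map $k\mapsto\phi_k(P)=\psi_k(P)$ sends $\bar N$ injectively into $\omega(\bar N)$, hence onto it. Under both allocations, $\bar N$ trades only among itself.

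Now return to your problematic case, where $i\in D_\phi$ and $\phi_i(P)=o_j$.
\begin{itemize}
\item $\phi_j(P)\neq o_j$ (agent $i$ holds it) rules out $j\in D_\psi$.
\item $\psi_j(P)\neq o_j$ rules out $j\in D_\phi$.
\end{itemize}
Hence $j\in\bar N$. Then $o_j\in\omega(\bar N)$ must be held under $\phi(P)$ by some member of $\bar N$, contradicting $\phi_i(P)=o_j$. So $\psi_j(P)=o_j$ always holds, and your chain-closing argument completes the proof.

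This is exactly how the paper finishes. Agents in $D_\psi$ keep their endowments under $\phi(P)$, and $\bar N$ holds $\omega(\bar N)$. Therefore the agents in $D_\phi$ receive exactly $\omega(D_\phi)$ under $\phi(P)$, and their top choices form a permutation of $D_\phi$. Each cycle of this permutation is elevated, executed by $\phi(P)$, and not executed by $\psi(P)$.
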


\begin{proof}
By \autoref{prop:sp-tr-implies-ir}, both rules satisfy IR.

Call agent $i$ a \emph{conflict agent} at profile $P$ if $\phi_i(P)\neq\psi_i(P)$. For a conflict agent, we say that the \emph{top-two condition} is satisfied if her assignments under the two rules are her top choice and her endowment, which she ranks second. We first construct a conflict profile $P$ at which the top-two condition is satisfied for every conflict agent.

Since $\phi\neq\psi$, there exists a conflict profile $P^*$. If every conflict agent satisfies the top-two condition at $P^*$, set $P=P^*$; the construction is complete. Otherwise, choose a conflict agent $i$ for whom the condition is not satisfied. Without loss of generality, suppose that $\psi_i(P^*)\,P_i^*\,\phi_i(P^*)$. Since $\phi$ satisfies IR, $\phi_i(P^*)\,R_i^*\,o_i$. Hence, $\psi_i(P^*)\,P_i^*\,o_i$, so $\psi_i(P^*)\neq o_i$.

Consider the profile $(P_i,P_{-i}^*)$, where $P_i:\psi_i(P^*)\succ o_i\succ\dots$, and the unlisted objects retain their relative order from $P_i^*$. Since $\psi$ satisfies SP, $\psi_i(P_i,P_{-i}^*)=\psi_i(P^*)$. Since $\phi$ satisfies SP and IR, $\phi_i(P_i,P_{-i}^*)=o_i$. Thus, $(P_i,P_{-i}^*)$ remains a conflict profile, agent $i$ remains a conflict agent, but the top-two condition is now also satisfied for her.

If every conflict agent satisfies the top-two condition at $(P_i,P_{-i}^*)$, set $P=(P_i,P_{-i}^*)$. Otherwise, choose a conflict agent $j$ for whom the top-two condition is not satisfied. Applying the same argument, with the roles of the two rules interchanged if necessary, we get $(P_i,P_j,P_{N\setminus\{i,j\}}^*)$, which is another conflict profile but for which the top-two condition is now also satisfied for agent $j$.

Continue in this manner. Observe that once an agent's ranking has been changed, her endowment is ranked second, and the top-two condition continues to be satisfied for this agent whenever she is a conflict agent. Since each iteration yields a conflict profile, after at most $n$ iterations we obtain a
conflict profile $P$ at which the top-two condition is satisfied for every conflict agent.

We now show that $P$ is an elevated conflict profile for $\phi$ and $\psi$. Partition the set of agents into the sets $N_\phi$, $N_\psi$, and $\bar N$, where:
\begin{itemize}
    \item $N_\phi=\{i\in N:\phi_i(P)\,P_i\,\psi_i(P)\}$ is the set of conflict agents who prefer their assignments under $\phi(P)$. 

    \item $N_\psi=\{i\in N:\psi_i(P)\,P_i\,\phi_i(P)\}$ is the set of conflict agents who prefer their assignments under $\psi(P)$. 

    \item $\bar N=\{i\in N:\phi_i(P)=\psi_i(P)\}$ is the set of non-conflict agents.
\end{itemize}
By the top-two condition, each agent in $N_\phi$ receives her top choice under $\phi(P)$ and her endowment under $\psi(P)$, and each agent in $N_\psi$ receives her top choice under $\psi(P)$ and her endowment under $\phi(P)$.

Consider $i\in\bar N$, and suppose that $\phi_i(P)=\psi_i(P)=o_j$. If $j\in N_\phi$, then $j\neq i$ and $\psi_j(P)=o_j$, contradicting $\psi_i(P)=o_j$. If $j\in N_\psi$, then $j\neq i$ and $\phi_j(P)=o_j$, contradicting $\phi_i(P)=o_j$. Hence, $j\in\bar N$. Consequently, under both allocations, the agents in $\bar N$ receive exactly the endowments owned by members of $\bar N$. It follows that:
\begin{itemize}
    \item Under $\phi(P)$, the agents in $\bar N$ receive exactly their own set of endowments, and each agent in $N_\psi$ receives her endowment. The agents in $N_\phi$ therefore receive exactly the endowments of the agents in $N_\phi$.

    \item Under $\psi(P)$, the agents in $\bar N$ receive exactly their own set of endowments, and each agent in $N_\phi$ receives her endowment. The agents in $N_\psi$ therefore receive exactly the endowments of the agents in $N_\psi$.
\end{itemize}

Since $P$ is a conflict profile, at least one of $N_\phi$ and $N_\psi$ is nonempty. Without loss of generality, suppose that $N_\psi\neq\emptyset$. Under $\psi(P)$, the agents in $N_\psi$ receive exactly the endowments of agents in $N_\psi$, and none receives her own. Their assignments therefore contain a cycle $(i_1,\dots,i_s)$ of size $s\geq2$. Each agent in this cycle receives her top choice under $\psi(P)$ and ranks her endowment second, so it is an elevated cycle at $P$. It is executed by $\psi(P)$ but not by $\phi(P)$, proving that $P$ is an elevated conflict profile for $\phi$ and $\psi$.
\end{proof}

We are now ready to prove the theorem.

\begin{proof}[Proof of \autoref{thm:characterization}]
The TTC rule, $\varphi^{\mathrm{ttc}}$, satisfies SP \citep{Roth1982EL} and selects the unique core allocation at every profile \citep{Shapley1974JME,Roth1977JME}. A core allocation satisfies IR and PR, which imply TR, TTR, and T2TR. Hence, $\varphi^{\mathrm{ttc}}$ satisfies the four axioms in the theorem. It remains to show uniqueness.

Suppose, for contradiction, that $\phi\neq\varphi^{\mathrm{ttc}}$ satisfies SP, TR, TTR, and T2TR. By \autoref{prop:sp-tr-implies-ir}, $\phi$ satisfies IR.

By \hyperref[lem:elevator2]{Elevator Lemma~\ref*{lem:elevator2}}, there exists an elevated conflict profile for $\phi$ and $\varphi^{\mathrm{ttc}}$. Since TTC executes every elevated cycle, $\phi$ fails to execute an elevated cycle at some profile. Let $s$ be the minimum size of such a cycle, where the minimum is taken over all profiles. An elevated cycle has size at least two, and TTR requires $\phi$ to execute every elevated cycle of size two. Therefore, $s\geq3$.

In the rest of the proof the following two arguments will be useful:
\begin{itemize}
    \item \textbf{Smaller all-tops cycle argument.} At every profile, $\phi$ executes every all-tops cycle of size strictly less than $s$. Otherwise, \hyperref[lem:elevator1]{Elevator Lemma~\ref*{lem:elevator1}} would yield an unexecuted elevated cycle of size strictly less than $s$, contradicting the definition of $s$.

    \item \textbf{Trigger argument.} Consider distinct agents $i_1,\dots,i_{s-1}$ at a profile where each agent $i_r$, for $r=1,\dots,s-2$, ranks $o_{i_{r+1}}$ first, and agent $i_{s-1}$ ranks $o_{i_1}$ second. If agent $i_{s-1}$ changes her report to rank $o_{i_1}$ first, this \emph{triggers} the formation of the all-tops cycle $(i_1,\dots,i_{s-1})$. By the smaller all-tops cycle argument, $\phi$ executes this cycle, assigning her $o_{i_1}$. Since this report gives her her original second choice, SP implies that her assignment at the original profile is either her top choice or her second choice.
\end{itemize}

Choose a profile $P$ at which $\phi$ does not execute an elevated cycle of size $s$. Relabeling agents if necessary, write this cycle as $(1,\dots,s)$, and let $S=\{1,\dots,s\}$. By IR and the observation preceding \hyperref[lem:elevator1]{Elevator Lemma~\ref*{lem:elevator1}}, every agent in $S$ receives her endowment under $\phi(P)$.

In the remainder of the proof, we slightly abuse notation and interpret indices referring to agents in $S$ and their endowments cyclically modulo $s$. Thus, agent $s+1$ means agent $1$, and $o_{s+1}=o_1$ and $o_{s+2}=o_2$. This is not in line with our model specification, under which $s+1$ and $s+2$ denote distinct agents outside $S$ (if they exist), and $o_{s+1}$ and $o_{s+2}$ denote their endowments. However, this local notational convention is harmless: it simplifies the exposition without changing the substance of our arguments.

For each $i\in S$, define $\hat P_i$ by placing $o_{i+1}$ first, $o_{i+2}$ second, and $o_i$ third, while preserving the relative order of all remaining objects. Let $\hat P=(\hat P_S,P_{N\setminus S})$. The transformation is illustrated below. The boxes on the left indicate assignments under $\phi(P)$; we will show that the boxes on the right indicate assignments under $\phi(\hat P)$.
\[
P_S:\!
\left\{
\begin{aligned}
P_1 &: o_2\succ\boxed{o_1}\succ\dots\\
P_2 &: o_3\succ\boxed{o_2}\succ\dots\\
&\ \vdots\\
P_s &: o_1\succ\boxed{o_s}\succ\dots
\end{aligned}
\right.
\quad\longrightarrow\quad
\hat P_S:\!
\left\{
\begin{aligned}
\hat P_1 &: o_2\succ\boxed{o_3}\succ o_1\succ\dots\\
\hat P_2 &: o_3\succ\boxed{o_4}\succ o_2\succ\dots\\
&\ \vdots\\
\hat P_s &: o_1\succ\boxed{o_2}\succ o_s\succ\dots
\end{aligned}
\right.
\]

Change the reports of agents $1,\dots,s$ in this order. Define
\[
\hat P^0=P,
\qquad
\hat P^i=(\hat P_{\{1,\dots,i\}},P_{N\setminus\{1,\dots,i\}})
\quad\text{for }i=1,\dots,s.
\]
Thus, $\hat P^s=\hat P$. Every agent in $S$ retains the same top choice throughout this sequence.

We show inductively that, immediately before agent $i$ changes her report, she receives $o_i$, and immediately afterward, she receives $o_{i+2}$. For agent $1$, the first assertion follows from $\phi_1(P)=o_1$. Suppose now that $\phi_i(\hat P^{i-1})=o_i$. At $\hat P^i$, agent $i$ can create an all-tops cycle consisting of all agents in $S$ except agent $i+1$ by reporting her second choice, $o_{i+2}$, first. The trigger argument therefore implies that she receives either $o_{i+1}$ or $o_{i+2}$. By SP of $\phi$ we rule out $o_{i+1}$. Hence, $\phi_i(\hat P^i)=o_{i+2}$.

If $i<s$, agent $i+1$ still reports her original ranking, under which $o_{i+2}$ is first and $o_{i+1}$ is second. Since $o_{i+2}$ is assigned to agent $i$, by IR of $\phi$, we find that $\phi_{i+1}(\hat P^i)=o_{i+1}$.
This establishes the induction. In particular, $\phi_s(\hat P)=o_2$.

At $\hat P$, the trigger argument applies to every agent in $S$: by reporting her second choice first, each agent creates an all-tops cycle of size $s-1$. Thus, every agent in $S$ receives one of her top two choices under $\phi(\hat P)$. Since agent $s$ receives $o_2$, agent $1$ cannot receive her top choice and must receive $o_3$. Agent $2$ must then receive $o_4$, and the same reasoning propagates around the cycle. Consequently, we find that $\phi_i(\hat P)=o_{i+2}$ for every $i\in S$.

Finally, consider
\[
\tilde P=(\tilde P_2,\hat P_{-2}),
\qquad
\tilde P_2:o_3\succ o_1\succ\dots,
\]
where the unlisted objects retain their relative order from $\hat P_2$. By SP of $\phi$, agent $2$ does not receive $o_3$ at $\tilde P$. Agent $2$ ranks $o_1$ second at $\tilde P$, while agent $1$ ranks $o_2$ first. Since agent $2$ does not receive her top choice, by T2TR of $\phi$ we obtain that agents $1$ and $2$ exchange their endowments under $\phi(\tilde P)$, so $\phi_1(\tilde P)=o_2$ and $\phi_2(\tilde P)=o_1$. But then under $\phi(\tilde P)$ agent $s$ receives neither $o_1$ nor $o_2$, her top two choices. However, she can create the all-tops cycle $(s,2,\dots,s-1)$, of size $s-1$, by reporting $o_2$ first. The trigger argument therefore requires her to receive one of her top two choices at $\tilde P$, a contradiction. This establishes uniqueness.
\end{proof}
\section{Summary and related literature}\label{sec:sum-rellit}

In this paper, we have shown that SP, IR, and PR characterize TTC (\autoref{cor:irpr}). Our main result replaces the two rationality axioms with TR, TTR, and T2TR, showing that incentive compatibility together with rationality requirements involving only the first two ranks suffices to determine the unique core allocation.

Our paper's closest antecedents are axiomatic characterizations of TTC. \citet{Ma1994IJGT} characterizes TTC by SP, IR, and Pareto efficiency, for which \citet{Anno2015EL} and \citet{Sethuraman2016ORL} provide short proofs. \citet{Ekici2024TE} replaces Pareto efficiency with pair efficiency; \citet{Ekici2024EL} provide a short proof. In our characterization, we replace the efficiency requirement with PR. Note that these conditions address different exchanges: pair efficiency concerns exchanges of assignments, whereas PR concerns exchanges of endowments. They are logically incomparable, even among IR allocations (\autoref{ex:fails-t2tr} in the \hyperref[sec:indep-of-axioms]{Appendix}), so our result provides a distinct characterization. As in our paper, \citet{Fujinaka2018GEB} dispense with efficiency and characterize TTC by SP, IR, and endowments-swapping-proofness. This latter axiom precludes a pair of agents from benefiting by exchanging endowments before the rule is applied. \citet{Miyagawa2002aGEB} shows that SP, IR, anonymity, and nonbossiness characterize the class consisting of TTC and the no-trade rule. In relation to our axioms, observe that the no-trade rule fails TTR whenever two agents rank each other's endowments first.

The connection between incentive compatibility and the core also extends beyond object reallocation. \citet{Sonmez1999Econometrica} studies a general allocation framework and shows, under suitable domain assumptions, that SP, IR, and Pareto efficiency require a rule to select from the core whenever it is nonempty. Our paper takes a different route to core selection in the object reallocation model: we show that rationality requirements for individuals and pairs suffice when combined with SP, without imposing efficiency.

The role of endowments connects our analysis to a broader literature on allocation without transfers. \citet{Abdulkadiroglu1999JET} extend TTC to house allocation with existing tenants and newcomers, preserving SP, IR, and Pareto efficiency. \citet{Papai2000Econometrica} characterizes hierarchical exchange rules, which generalize TTC, by group strategy-proofness, Pareto efficiency, and reallocation-proofness, and \citet{Pycia2017TE} characterize the full class of group strategy-proof and Pareto-efficient rules as trading cycles. \citet{Papai2007JET} extends the exchange problem to agents endowed with multiple goods and characterizes fixed-deal exchange rules by SP, IR, and a weak efficiency requirement. \citet{Balbuzanov2019Econometrica} reconsider the interpretation of ownership, treating endowments as exclusion rights and developing the exclusion core. These studies broaden the allocation environment or the rights governing trade. In our paper, we retain the standard endowment structure and blocking notion, and ask how weak individual and bilateral rationality requirements can recover the core.

The comparison between DA and TTC has also been developed in school choice. \citet{Abdulkadiroglu2003AER} adapt both mechanisms to this setting, where school priorities replace ownership of objects. Their DA mechanism eliminates justified envy, while their TTC mechanism achieves Pareto efficiency. \citet{Morrill2013ET} provides axiomatic characterizations of TTC in this setting using independence of irrelevant rankings and mutual best. One of his characterizations also permits a comparison with DA. In our paper, we compare the two rules in the original marriage and object reallocation models: IR, PR, and strategy-proofness for the relevant agents identify the corresponding rule in each model.

Finally, our result on restrictions on cycle size connects the paper to kidney exchange. \citet{Roth2004QJE} adapt TTC to kidney exchange with preferences over compatible kidneys. The logistical demands of simultaneous operations motivate restrictions on cycle size, as discussed by \citet{Ashlagi2021MS}. Under dichotomous preferences, \citet{Roth2005JET} construct strategy-proof mechanisms that are efficient among feasible pairwise exchanges, while \citet{Roth2007AER} examine the gains from allowing three-way exchanges. Our corollary identifies an axiomatic obstacle on the unrestricted domain of strict preferences: a cycle cap below the number of agents is incompatible with SP, TR, TTR, and T2TR, even without requiring efficiency among feasible allocations (\autoref{cor:restricted-cycles}). Here, the difference in preference domains is essential: our result does not establish an impossibility for models in which patients are indifferent among compatible kidneys.

\section*{Appendix: Independence of axioms}
\label{sec:indep-of-axioms}

The independence of the four axioms in \autoref{thm:characterization} follows from Examples~\ref{ex:fails-sp}, \ref{ex:fails-tr-pr}, \ref{ex:fails-ttr}, and~\ref{ex:fails-t2tr}. The rules constructed in these examples are denoted by $\phi^{\mathrm{SP}}$, $\phi^{\mathrm{TR}}$, $\phi^{\mathrm{TTR}}$, and $\phi^{\mathrm{T2TR}}$, respectively. Each rule violates the axiom indicated by its superscript while satisfying the other three axioms in the theorem. Together, these examples show that none of the four axioms can be dropped from the characterization in general.

In the displayed profiles below, boxes indicate the assignments selected by the rule under consideration.

\begin{example}[\sout{SP}, TR, TTR, T2TR]
\label{ex:fails-sp}
Consider the case with three agents. Let $\mu^0=[o_1,o_2,o_3]$ denote the endowment allocation. Define the rule $\phi^{\mathrm{SP}}$ by
\[
\phi^{\mathrm{SP}}(P)=
\begin{cases}
\mu^0, & \text{if $\mu^0$ satisfies PR at $P$},\\
\varphi^{\mathrm{ttc}}(P), & \text{otherwise}.
\end{cases}
\]
Thus, the rule selects the endowment allocation whenever it satisfies PR and otherwise selects the TTC allocation.

\medskip
\noindent\textbf{TR, TTR, and T2TR.}
The rule satisfies IR because both the endowment allocation and the TTC allocation satisfy IR. It therefore satisfies TR. It also satisfies PR: the endowment allocation is selected only when it satisfies PR, and the TTC allocation always satisfies PR. Since PR implies TTR and T2TR, the rule satisfies both axioms.

\medskip
\noindent\textbf{Failure of SP.}
Consider the profile
\[
P:\!
\left\{
\begin{aligned}
P_1 &: o_2 \succ \boxed{o_1} \succ o_3,\\
P_2 &: o_3 \succ \boxed{o_2} \succ o_1,\\
P_3 &: o_1 \succ \boxed{o_3} \succ o_2.
\end{aligned}
\right.
\]
The endowment allocation satisfies PR at $P$: no two agents both prefer the other's endowment to their own. Hence, $\phi^{\mathrm{SP}}(P)=\mu^0$.

Now suppose agent 3 reports
\[
\tilde P_3:o_1\succ o_2\succ o_3.
\]
At $\tilde P=(\tilde P_3,P_{-3})$, the endowment allocation violates PR because agents 2 and 3 would both benefit from exchanging their endowments. The rule therefore selects the TTC allocation. Since the agents' top choices still form the cycle $(1,2,3)$,
\[
\phi^{\mathrm{SP}}(\tilde P)=[o_2,o_3,o_1].
\]
Agent 3 receives $o_1$ instead of $o_3$, a strict improvement according to her true preferences. Thus, the rule violates SP.
\end{example}

\begin{example}[SP, \sout{TR}, TTR, T2TR]
\label{ex:fails-tr-pr}
Consider the case with three agents. Define the rule $\phi^{\mathrm{TR}}$ by \autoref{tab:fails-tr-pr}. The rule checks the rows from top to bottom and selects the allocation specified in the first row whose conditions are satisfied.

\begin{table}[htbp]
\centering
\caption{A rule that satisfies SP, TTR, and T2TR but violates TR and PR}
\label{tab:fails-tr-pr}
\renewcommand{\arraystretch}{1.5}
\setlength{\tabcolsep}{8pt}
\begin{tabular}{
    p{0.30\textwidth}
    p{0.42\textwidth}
    p{0.17\textwidth}
}
\hline
Agent 1's preference
    & Additional condition
    & \centering $\phi^{\mathrm{TR}}(P)$\tabularnewline
\hline

$P_1\colon o_1\succ o_2\succ o_3$
    & $P_2\colon o_2\succ o_1\succ o_3$
    & \centering $[o_2,o_3,o_1]$\tabularnewline

$P_1\colon o_1\succ o_3\succ o_2$
    & $P_2\colon o_1\succ o_2\succ o_3$ or $P_2\colon o_2\succ o_3\succ o_1$
    & \centering $[o_1,o_2,o_3]$\tabularnewline

$P_1\colon o_1\succ o_3\succ o_2$
    & $P_2\colon o_2\succ o_1\succ o_3$
    & \centering $[o_3,o_2,o_1]$\tabularnewline

$o_2$ is agent 1's top choice
    & $o_1\,P_2\,o_3$
    & \centering $[o_2,o_1,o_3]$\tabularnewline

$o_2$ is agent 1's top choice
    & $o_3\,P_2\,o_1$ and $o_1\,P_3\,o_2$
    & \centering $[o_2,o_3,o_1]$\tabularnewline

$o_3$ is agent 1's top choice
    & $o_1\,P_3\,o_2$
    & \centering $[o_3,o_2,o_1]$\tabularnewline

$o_3$ is agent 1's top choice
    & $o_2\,P_3\,o_1$ and $o_1\,P_2\,o_3$
    & \centering $[o_3,o_1,o_2]$\tabularnewline

Any remaining case
    & ---
    & \centering $[o_1,o_3,o_2]$\tabularnewline
\hline
\end{tabular}
\end{table}

\medskip
\noindent\textbf{SP.}
Fixing the other agents' reports, call the set of objects an agent can obtain through her own reports her \emph{menu}. It suffices to show that truthful reporting always selects her most-preferred object from this menu.

For agent 1, there are three cases. If $P_2:o_2\succ o_1\succ o_3$, her menu is $\{o_2,o_3\}$, and the rule assigns whichever of these she prefers. If $o_3\,P_2\,o_1$ and $o_2\,P_3\,o_1$, her menu is the singleton $\{o_1\}$. In every remaining case, her menu is $O$, and she receives her reported top choice.

For agent 2, if $P_1:o_1\succ o_2\succ o_3$, her assignment is always $o_3$. If $P_1:o_1\succ o_3\succ o_2$, her menu is $\{o_2,o_3\}$, and she receives whichever of these she prefers. If agent 1 ranks $o_2$ first, agent 2's menu is $\{o_1,o_3\}$, and she again receives her most-preferred object in her menu. Finally, if agent 1 ranks $o_3$ first, agent 2 always receives $o_2$ when $o_1\,P_3\,o_2$; otherwise, she receives her most-preferred object in $\{o_1,o_3\}$.

For agent 3, her assignment is independent of her report whenever agent 1 ranks $o_1$ first, or whenever agent 1 ranks $o_2$ first and $o_1\,P_2\,o_3$. In every remaining case, her menu is $\{o_1,o_2\}$, and she receives whichever of these she prefers. Thus, no agent can benefit from misreporting, establishing SP.

\medskip
\noindent\textbf{TTR.}
If agents 1 and 2 rank each other's endowments first, the fourth row selects $[o_2,o_1,o_3]$. If agents 1 and 3 do so, the sixth row selects $[o_3,o_2,o_1]$. If agents 2 and 3 do so, none of the first seven rows applies, and the last row selects $[o_1,o_3,o_2]$. Thus, the rule satisfies TTR.

\medskip
\noindent\textbf{T2TR.}
Consider an ordered pair $(i,j)$ such that agent $i$ ranks $o_j$ second and agent $j$ ranks $o_i$ first. There are six cases:
\begin{itemize}
    \item \textbf{Case 1:} $(i,j)=(1,2)$. Agent 1's top choice is either $o_1$ or $o_3$. In the former case, the last row assigns her $o_1$; in the latter, the sixth or seventh row assigns her $o_3$. Thus, she always receives her top choice.

    \item \textbf{Case 2:} $(i,j)=(1,3)$. If agent 1 ranks $o_2$ first, the fourth or fifth row assigns it to her. If she ranks $o_1$ first, she receives it except when $P_2:o_2\succ o_1\succ o_3$; in that case, the third row selects $[o_3,o_2,o_1]$.

    \item \textbf{Case 3:} $(i,j)=(2,1)$. Agent 2's ranking is either $o_2\succ o_1\succ o_3$ or $o_3\succ o_1\succ o_2$. In the former case, the fourth row selects $[o_2,o_1,o_3]$. In the latter, the fifth or last row assigns agent 2 her top choice, $o_3$.

    \item \textbf{Case 4:} $(i,j)=(2,3)$. Agent 2's ranking is either $o_1\succ o_3\succ o_2$ or $o_2\succ o_3\succ o_1$. In the former case, she receives her top choice when agent 1 ranks $o_2$ or $o_3$ first. In the latter, she receives her top choice when $P_1:o_1\succ o_3\succ o_2$. In all remaining cases, the last row selects $[o_1,o_3,o_2]$.

    \item \textbf{Case 5:} $(i,j)=(3,1)$. Agent 3's ranking is either $o_2\succ o_1\succ o_3$ or $o_3\succ o_1\succ o_2$. In the former case, the seventh or last row assigns her $o_2$, her top choice. In the latter, the sixth row selects $[o_3,o_2,o_1]$.

    \item \textbf{Case 6:} $(i,j)=(3,2)$. Agent 3's ranking is either $o_1\succ o_2\succ o_3$ or $o_3\succ o_2\succ o_1$. In the former case, she receives her top choice whenever agent 1 ranks $o_2$ or $o_3$ first; otherwise, the last row selects $[o_1,o_3,o_2]$. In the latter case, the last row always selects $[o_1,o_3,o_2]$.
\end{itemize}
Thus, in every case, either agent $i$ receives her top choice or agents $i$ and $j$ exchange their endowments. This establishes T2TR.

\medskip
\noindent\textbf{Failure of TR.}
Consider the profile
\[
P:\!
\left\{
\begin{aligned}
P_1 &: o_1 \succ \boxed{o_2} \succ o_3,\\
P_2 &: o_2 \succ o_1 \succ \boxed{o_3},\\
P_3 &: \boxed{o_1} \succ o_2 \succ o_3.
\end{aligned}
\right.
\]
The rule selects $\phi^{\mathrm{TR}}(P)=[o_2,o_3,o_1]$. Agent 1 ranks her endowment first but receives $o_2$. Thus, the rule violates TR.

The rule also violates PR. At the same profile, agents 1 and 2 could exchange their endowments and receive $o_2$ and $o_1$, respectively. Agent 1 would retain her assignment, while agent 2 would be strictly better off. This also establishes that SP, TTR, and T2TR together do not imply PR.
\end{example}

\begin{example}[SP, TR, \sout{TTR}, T2TR]
\label{ex:fails-ttr}
Consider the case with two agents. Let $\phi^{\mathrm{TTR}}$ denote the endowment rule, which assigns each agent her endowment: $\phi^{\mathrm{TTR}}(P)=[o_1,o_2]$ for every profile $P$.

The rule satisfies SP because its outcome is independent of the agents' reports. It satisfies IR and therefore TR because each agent receives her endowment. It also satisfies T2TR: whenever agent $i$ ranks agent $j$'s endowment second, she ranks her own endowment first and therefore receives her top choice.

However, the rule violates TTR at the profile
\[
P:\!
\left\{
\begin{aligned}
P_1 &: o_2 \succ \boxed{o_1},\\
P_2 &: o_1 \succ \boxed{o_2}.
\end{aligned}
\right.
\]
The rule selects $\phi^{\mathrm{TTR}}(P)=[o_1,o_2]$. Both agents rank each other's endowments first, but neither receives her top choice. Thus, the rule violates TTR.
\end{example}

\begin{example}[SP, TR, TTR, \sout{T2TR}]
\label{ex:fails-t2tr}
Consider the case with three agents. Define the rule $\phi^{\mathrm{T2TR}}$ as follows.

At a given profile, let each agent point to her top-ranked object and each object point to its owner. If there is a cycle of size one or two, the rule selects the TTC allocation. Otherwise, the agents form a three-way cycle. In this case, each agent ranks her endowment either second or last. Call these agents \emph{type S} and \emph{type L}, respectively.

For a three-way cycle, the rule assigns objects as follows:
\begin{itemize}
    \item \textbf{Case 1: All agents have the same type.} Assign each agent her second choice.

    \item \textbf{Case 2: Two agents are type L and one is type S.} The two type-L agents exchange their endowments, and the type-S agent receives her endowment.

    \item \textbf{Case 3: Two agents are type S and one is type L.} The type-S agent who points to the other type-S agent’s endowment receives her endowment. The remaining two agents exchange their endowments.
\end{itemize}

These assignments are feasible. In Case 1, assigning second choices reverses the three-way cycle when all agents are type L and gives the endowment allocation when all agents are type S. In Cases 2 and 3, two agents exchange their endowments and the remaining agent receives her endowment.

The following observation will be useful for verifying SP. For any agent $k$, write the three-way cycle as $(k,i,j)$. The rule assigns $k$ her top choice if $i$ is type L and $j$ is type S; otherwise, it assigns $k$ her second choice. Thus, whether $k$ receives her first or second choice depends only on the other two agents' types.

\medskip
\noindent\textbf{TR and TTR.}
Whenever the rule selects the TTC allocation, IR holds. In a three-way cycle, every agent receives her first or second choice. A type-S agent ranks her endowment second, and a type-L agent ranks her endowment last. Hence, IR holds in this case as well. The rule therefore satisfies TR.

If two agents rank each other's endowments first, they form a cycle of size two. The rule then selects the TTC allocation, which executes their cycle. Thus, the rule satisfies TTR.

\medskip
\noindent\textbf{SP.}
Fix an agent $k$ and the reports of the other two agents, $i$ and $j$.

If $i$ or $j$ points to her own endowment, or $i$ and $j$ point to each other’s endowments, the rule coincides with TTC regardless of $k$'s report. Strategy-proofness of TTC therefore rules out a profitable deviation by $k$.

Otherwise, every cycle must contain $k$. If truthful reporting places $k$ in a cycle of size one or two, she receives her top choice and cannot benefit from a deviation. It remains to consider the case in which truthful reporting produces a three-way cycle, $(k,i,j)$.

If $k$ changes her reported top choice, she points either to $o_k$ or to $o_j$. This creates a cycle of size one or two, respectively. The rule then selects the TTC allocation and assigns $k$ her reported top choice, which is her true second or third choice. Since truthful reporting gives her either her true first or her true second choice, such a deviation cannot benefit her.

If $k$ keeps her top choice unchanged, the only possible misreport reverses her second and third choices. By the observation above, whether she receives her reported first or second choice depends only on the types of $i$ and $j$. If $i$ is type L and $j$ is type S, she receives her top choice under either report. Otherwise, truthful reporting gives her her true second choice, whereas the misreport gives her her true third choice. Thus, this deviation cannot benefit her either. This establishes SP.

\medskip
\noindent\textbf{Failure of T2TR.}
Consider the profile
\[
P:\!
\left\{
\begin{aligned}
P_1 &: o_2 \succ \boxed{o_3} \succ o_1,\\
P_2 &: o_3 \succ \boxed{o_1} \succ o_2,\\
P_3 &: o_1 \succ \boxed{o_2} \succ o_3.
\end{aligned}
\right.
\]
The agents form a three-way cycle and are all type L. Therefore, the rule assigns each agent her second choice: $\phi^{\mathrm{T2TR}}(P)=[o_3,o_1,o_2]$.

Agent 1 ranks $o_3$ second, and agent 3 ranks $o_1$ first. However, agent 1 does not receive her top choice, and agents 1 and 3 do not exchange their endowments: agent 3 receives $o_2$ instead of $o_1$. Thus, the rule violates T2TR.

\medskip
\noindent\textbf{Pair rationality and pair efficiency.}
This example also shows that PR and pair efficiency are logically independent, even among individually rational allocations. At the displayed profile, the allocation $[o_3,o_1,o_2]$ is pair efficient because every exchange of assignments makes one participant worse off. It violates PR: agents 1 and 3 can exchange their endowments, leaving agent 1's assignment unchanged while strictly improving agent 3's assignment from $o_2$ to $o_1$.

For the converse, consider Case 3 and relabel the agents so that the three-way cycle is $(1,2,3)$, agents 1 and 2 are type S, and agent 3 is type L. The rule selects $[o_1,o_3,o_2]$. Agents 2 and 3 already exchange their endowments. An exchange of endowments between agents 1 and 2 would make agent 2 worse off, and one between agents 1 and 3 would make agent 1 worse off. Thus, PR holds. However, agents 1 and 3 can both obtain their top choices by exchanging their assignments, so pair efficiency fails. Both allocations satisfy IR, as established above for every allocation selected by the rule.
\end{example}

\bibliography{matchingBib}

\end{document}